\newcommand{\indis}{indistinguishability}
\newcommand{\repset}{representative set}
\newcommand{\width}{diversity}
\newcommand{\set}[1]{\ensuremath{\left\{ #1 \right\}}}
\newtheorem{As}{Assumption}
\newtheorem{theorem}{Theorem}
\newtheorem{remark}{Remark}
\newtheorem{example}{Example}
\newtheorem{proposition}{Proposition}
\newtheorem{definition}{Definition}
\newtheorem{lemma}{Lemma}
\newtheorem{corollary}{Corollary}
\newtheorem{instance}{Instance}
\newcommand{\user}{\textsc{User}}
\newcommand{\task}{\textsc{Task}}
\newcommand{\enc}{\textsc{enc}}
\newcommand{\pat}{\textsc{pat}}
\newcommand{\vect}{\textsc{vec}}
\newcommand{\tick}{\ding{52}}
\def\showadditions{0}
\def\showdeletions{0}
\newcommand{\add}[1]{#1}
\renewcommand{\add}[1]{{\bf{#1}}}
\newcommand{\remove}[1]{}
\renewcommand{\remove}[1]{{\color{red}\sout{#1}}}
\begin{document}

\title{Iterative Plan Construction for the Workflow Satisfiability Problem}

\author{\name David Cohen \email D.Cohen@rhul.ac.uk \\
\name Jason Crampton \email Jason.Crampton@rhul.ac.uk \\
\name Andrei Gagarin \email Andrei.Gagarin@rhul.ac.uk \\
\name Gregory Gutin \email G.Gutin@rhul.ac.uk \\
\name Mark Jones \email M.E.L.Jones@rhul.ac.uk \\
\addr Royal Holloway, University of London, UK }

\maketitle

\begin{abstract}
The \emph{Workflow Satisfiability Problem (WSP)} is a problem of practical interest that arises whenever tasks need to be performed by authorized users, subject to 
constraints defined by business rules.  We are required to decide whether there exists a \emph{plan} -- an assignment of tasks to authorized users -- such that all constraints are satisfied.
Several bespoke algorithms have been constructed for solving the WSP, optimised to deal with constraints (business rules) of particular types.

\remove{ The 
{\em Constraint Satisfaction Problem (CSP)} is a general paradigm for expressing, in a declarative format, problems where variables are to be assigned values from some domain.  The assignments are constrained by restricting the allowed simultaneous assignments to some sets of variables.  The CSP is in general NP-hard so there has been a considerable effort made into understanding the effect of restricting the type of allowed constraints.   This research program is nearing completion and we can strongly conjecture precisely which kinds of constraint language lead to polynomial solvability.}

It is natural to see the WSP as a subclass of the {\em Constraint Satisfaction Problem (CSP)} in which the variables are tasks and the domain is the set of users.
\remove{In this case we have a unary authorization constraint on every task.}
What makes the WSP distinctive as a CSP is that we can assume that the number of tasks is very small compared to the number of users.  This is in sharp contrast with traditional CSP models where the domain is small and the number of variables is very large.  
As such, it is appropriate to ask for which constraint languages the WSP is fixed-parameter tractable (FPT), parameterized by the number of tasks.

This novel approach to the WSP, using techniques for CSP,
has enabled us to generalise and unify existing algorithms. 
\add{Rather than considering algorithms for specific constraints, we 
design a generic algorithm which is a fixed-parameter algorithm
for several families of workflow constraints considered in the literature.}
We have identified a new FPT constraint language, user-independent constraint,   that includes many of the constraints of interest in business processing systems. We are also able to prove that the union of FPT languages remains FPT if they satisfy a simple compatibility condition.  
\remove{ This begins a theory of fixed parameter tractability of constraint languages for the WSP.}

In this paper we present our generic algorithm, in which plans are grouped into equivalence classes, each class being associated with a \emph{pattern}.
We demonstrate that our generic algorithm has running time
$O^*(2^{k\log k})$, where $k$ is the number of tasks, for the language of user-independent constraints. We also show 
\remove{that the generic algorithm is optimal for user-independent constraints in the following sense: } there is no algorithm of running time
$O^*(2^{o(k\log k)})$ for user-independent constraints unless the Exponential Time Hypothesis fails.

\end{abstract}

\section{Introduction}

\subsection{The Workflow Satisfiability Problem}

A business process is a collection of interrelated tasks that are performed by users in order to achieve some objective.
In many situations, we wish to restrict the users that can perform certain tasks.
In particular, we may wish to specify lists of users who are authorized to perform each of the workflow tasks.
Additionally, we may wish -- either because of the particular requirements of the business logic or security requirements -- to prevent certain combinations of users from performing particular combinations of tasks~\cite{Cr05}.
Such constraints include \emph{separation-of-duty} (also known as the ``two-man'' rule), which may be used to prevent sensitive combinations of tasks being performed by a single user, and \emph{binding-of-duty}, which requires that a particular combination of tasks is executed by the same user.
The use of constraints in workflow management systems to enforce security policies has been studied extensively in the last fifteen years; see~\cite{BeFeAt99,Cr05,WangLi10}, for example.

It is possible that the combination of constraints and authorization lists is ``unsatisfiable'', in the sense that there does not exist an assignment of users to tasks such that all contraints are satisfied and every task is performed by an authorized user; perhaps the minimal example being a requirement that two tasks are performed by the same user but the intersection of the authorization lists for these tasks is empty.
A plan that satisfies all constraints and allocates an authorized user to each task is said to be ``valid''.
The \emph{Workflow Satisfiability Problem (WSP)} takes a workflow specification as input and returns a valid plan if one exists and a null value otherwise.
It is important to determine whether a business process is satisfiable or not, since an unsatisfiable one can never be completed without violating the security policy encoded by the constraints and authorization lists.
Wang and Li~\cite{WangLi10} have shown, by a reduction from {\sc Graph Coloring}, that {\sc WSP} is an NP-hard subclass of CSP, even when we only consider binary separation-of-duty constraints.
So it is important that an algorithm that solves WSP is as efficient as possible~\cite[\S 2.2]{CrGu}.

\add{Many hard problems become less complex if some natural parameter of the instance is bounded.
Hence, we say a problem with input size $n$ and parameter $k$ is \emph{fixed-parameter tractable} (FPT) if it admits an algorithm with running time $O(f(k) n^d)$, where $d$ is a constant independent of $n$ and $k$, and $f$ is a computable function depending only on $k$.}\footnote{For an introduction to fixed-parameter algorithms and complexity, see, e.g., \cite{DowFel99,Nie06}.}
Wang and Li \cite{WangLi10} were the first to observe that fixed-parameter algorithmics is an appropriate way to study WSP,
because the number of tasks is usually small and often much smaller than the number of users.
It is appropriate therefore to consider fixed parameter algorithms for solving the WSP parameterized by the number of tasks, and to ask which constraint languages are fixed parameter tractable.
\remove{This paper begins a general theory of fixed parameter tractability of constraint languages for the WSP.  }

Wang and Li \cite{WangLi10}  proved that, in general, WSP is W[1]-hard and thus is highly unlikely to admit a fixed-parameter algorithm.
However, WSP is FPT if we consider only separation-of-duty and binding-of-duty constraints \cite{WangLi10}.
Crampton et al.~\cite{CrGuYeJournal} obtained significantly faster fixed-parameter algorithms that were applicable to ``regular'' constraints, thereby including the cases shown to be FPT by Wang and Li.
Subsequent research has demonstrated the existence of fixed-parameter algorithms for {\sc WSP} in the presence of other constraint types~\cite{CCGJR,CrGu}.
We define WSP formally and introduce a number of different constraint types, including regular constraints, in Section~\ref{sec:background}.

We will use the $O^*$ notation, which suppresses polynomial factors. That is, $g(n,k,m)  = O^*(h(n,k,m))$ if there exists a polynomial $q(n,k,m)$ such that $g(n,k,m) = O(q(n,k,m)h(n,k,m))$. In particular, an FPT algorithm is one that runs in time $O^*(f(k))$ for some computable function $f$ depending only on $k$.

\subsection{Relation Between WSP and CSP}

The \emph{Constraint Satisfaction Problem (CSP)} is a general paradigm for expressing, in a declarative format, problems where variables are to be assigned values from some domain.  The assignments are constrained by restricting the allowed simultaneous assignments to some sets of variables.  This model is useful in many application areas including planning, scheduling, frequency assignment and circuit verification~\cite{CPhandbook}.  The CSP community is a well-established research community dedicated to finding effective solution techniques for the CSP~\cite{DechterCP}.  

The CSP is NP-hard, even when only binary not-equals constraints are allowed and the domain has  three elements, as we can reduce {\sc Graph 3-Coloring} to the CSP.
\footnote{Wang and Li's NP-hardness result for WSP is thus a restatement of this well-known result for CSP.}
Hence, a considerable effort has been made to understand the effect of restricting the type of allowed constraints.   This research program is nearing completion and there is now strong evidence to support the algebraic dichotomy conjecture of Krokhin, Bulatov and Jeavons  characterising precisely which kinds of constraint language lead to polynomial solvability~\cite{KrBuJe}.

It is natural to see the WSP as a subclass of the CSP in which the variables are tasks and the domain is the set of users.
\add{In the language of CSP, we may view authorization lists as arbitrary unary constraints. Thus, WSP is a restriction of CSP in which the only constraint languages considered are those that allow all unary constraints.}
What makes the WSP distinctive as a CSP is that we can assume the number of tasks is very small compared to the number of users \cite{WangLi10}.  This is in sharp contrast with traditional CSP models where the domain is small and the number of variables is very large.  So it is natural to ask what the complexity of solving the WSP is when the number of tasks is a parameter $k$ of the problem,
\add{and we place restrictions on the allowed types of non-unary constraints.}

\subsection{Outline of the Paper}
Our novel approach to the WSP using techniques for CSP, characterising types of constraints as constraint languages with particular characteristics,  enables us to generalise and unify existing algorithms.
So, in this paper, for the first time, rather than considering algorithms for specific constraints, we 
design a generic algorithm which is a fixed-parameter algorithm
for several families of workflow constraints considered in the literature.
\add{In particular we introduce the notion of \emph{user-independent constraints}, which subsume a number of well-studied constraint types from the WSP literature, including the regular constraints studied by Crampton et al. \cite{CrGuYeJournal}.}

Our generic algorithm builds plans incrementally, discarding plans that can never satisfy the constraints.
It is based on a naive algorithm, presented in Section~\ref{sec:naive-solWSP}.
This naive algorithm stores 
far more information than is required to solve WSP,
so its running time is no better than exhaustively searching for a valid plan.

Our generic algorithm uses a general and classic paradigm: retain as little information as possible in every step of the algorithm.
This paradigm is used in such classical polynomial-time algorithms as Gaussian elimination for solving systems of linear equations and constraint propagation algorithms (used, for example, to solve 2SAT in polynomial time).
Our generic algorithm uses this paradigm in a problem-specific way, based on the concepts of extension-equivalence, plan-\indis{} and patterns, enabling us to retain a single pattern for each equivalence class of indistinguishable plans.
Extension-equivalence and plan encodings are described in Section~\ref{sec:ext-equiv}.


In Section~\ref{sec:solWSP}, we describe our pattern-based algorithm and demonstrate that it is
a fixed-parameter algorithm for WSP with user-independent constraints. We
show
the running time of our algorithm is $O^*(2^{k\log k})$ for WSP with
user-independent constraints and that there is no algorithm of running time
$O^*(2^{o(k\log k)})$ for WSP with user-independent constraints unless the Exponential Time Hypothesis\footnote{The Exponential Time Hypothesis claims there is no algorithm of running time $O^*(2^{o(n)})$ for 3SAT on $n$ variables \cite{ImPaZa01}.} (ETH) fails. Thus, unlike WSP with regular constraints (and problems studied in \cite{BoCyKrNe13,FoLoSa}), WSP with user-independent constraints is highly unlikely to admit an algorithm of
running time  $O^*(2^{O(k)}).$
To show that our generic algorithm is of interest for constraints other than user-independent, we prove that the generic algorithm is a single-exponential algorithm  for 
a constraint language obtained by an equivalence relation on the set of users.

In Section \ref{sec5} we show how our generic
algorithm can deal with unions of constraint languages.
This leads to a generalisation of our result for user-independent
constraints. 
We conclude
with
 discussions in Section \ref{sec6}.

\section{Background}\label{sec:background}


We define a \emph{workflow schema} to be a tuple $(S,U,A,C)$, where $S$ is the set of tasks in the workflow, $U$ is the set of users,  $A = \{A(s):\  s\in S\},$ where $A(s) \subseteq U$ is the \emph{authorization list} for task $s$, and $C$ is a set of workflow constraints. A \emph{workflow constraint} is a pair $c = (L,\Theta)$, where $L \subseteq S$ and $\Theta$ is a set of functions from $L$ to $U$: $L$ is the \emph{scope} of the constraint; $\Theta$ specifies those assignments of elements of $U$ to elements of $L$ that {\em satisfy} the constraint $c$.

Given $T \subseteq S$ and $X \subseteq U$, a \emph{plan} is a function $\pi: T \rightarrow X$.
Given a workflow constraint $(L,\Theta)$, $T \subseteq S$ and $X \subseteq U$, a plan $\pi: T \rightarrow X$ \emph{satisfies} $(L,\Theta)$ if
 either $L\setminus T\neq
\emptyset$ or $\pi|_L=\theta$ for some $\theta\in \Theta$.
 A plan $\pi: T \rightarrow X$ is \emph{eligible} if $\pi$ satisfies every constraint in $C$.
A plan $\pi: T \rightarrow X$ is \emph{authorized} if $\pi(s) \in A(s)$ for all $s \in T$.
 A plan is \emph{valid} if it is both authorized and eligible.
 A plan $\pi: S \rightarrow U$ is called a \emph{complete plan}.
 An algorithm to solve WSP takes a workflow schema $(S,U,A,C)$ as input and outputs a valid, complete plan, if one exists (and null, otherwise).


 \add{ As a running example, consider the following instance of WSP.}
 
 \begin{instance} \label{inst:mainExample}
 The task set $S = \{s_1,\dots, s_4\}$ and the user set $U = \{u_1, \dots, u_6\}$.
 The authorization lists are as follows (here a tick indicates that the given user is authorized for the given task):
 
 \begin{center}
    \begin{tabular}{|c | l | l | l | l | l | l |}
    \hline
    & $u_1$ & $u_2$ & $u_3$ & $u_4$ & $u_5$ & $u_6$ \\ \hline
    $s_1$ &\tick &\tick & & & & \\ \hline
    $s_2$ & & \tick& \tick& & & \\ \hline
    $s_3$ & & \tick& &\tick &\tick & \tick \\ \hline
    $s_4$ & & \tick& &\tick &\tick & \tick \\ \hline
    \end{tabular}
\end{center}

The constraints are as follows:
$s_1$ and $s_2$ must be assigned to the same user;
$s_2$ and $s_3$ must be assigned to different users;
$s_3$ and $s_4$ must be assigned to different users;
$s_1$ and $s_4$ must be assigned to different users.

 \end{instance}
 
 \add{
 We may denote the first constraint by $(s_1,s_2,=)$, and the last three constraints by $(s_2,s_3, \neq)$, $(s_3,s_4, \neq)$, $(s_1,s_4, \neq)$, respectively (see Subsection \ref{subsec:constraints}).
 
 \begin{example}
Consider Instance \ref{inst:mainExample}.

The following table gives the assignments for four plans $\pi_1, \pi_2, \pi_3, \pi_4$ (a dash indicates that the plan does not assign a task to any user):

 \begin{center}
    \begin{tabular}{|c | l | l | l | l | l | l | l |}
    \hline
    & $s_1$ & $s_2$ & $s_3$ & $s_4$ & Authorized & Eligible & Complete \\ \hline
    $\pi_1$ & $u_1$ &$u_2$ &$u_4$ &$u_5$ &\tick & &\tick \\ \hline
    $\pi_2$ & $u_1$&$u_1$ &$u_4$ &$u_5$ & &\tick &\tick \\ \hline
    $\pi_3$ & $u_1$& - &$u_4$ &$u_5$ &\tick &\tick & \\ \hline
    $\pi_4$ & $u_2$& $u_2$ &$u_4$ &$u_5$ &\tick &\tick &\tick \\ \hline
    \end{tabular}
\end{center}
  
  $\pi_1$ is a complete plan which is authorized but not eligible, as $s_1$ and $s_2$ are assigned to different users.
 
  $\pi_2$ is a complete plan which is eligible but not authorized, as $u_1$ is not authorized for $s_2$.
  
   $\pi_3$ is a plan which is authorized and eligible, and therefore valid.
   However, $\pi_3$ is not a complete plan as there is no assignment for $s_2$.
   
   $\pi_4$ is a complete plan which is eligible and authorized. Thus $\pi_4$ is a valid complete plan, and is therefore a solution to Instance \ref{inst:mainExample}.
 \end{example}
 
 }

 For an algorithm that runs on an instance $(S,U,A,C)$ of {\sc WSP}, we will measure the running time in terms of $n=|U|, k=|S|$, and $m=|C|$.
(The set $A$ of authorization lists consists of $k$ lists each of size at most $n$, so we do not need to consider the size of $A$ separately when measuring the running time.)
We will say an algorithm runs in polynomial time if it has running time at most $p(n,k,m)$, where  $p(n,k,m)$ is polynomial in $n,k$ and $m$.

\subsection{WSP Constraints}\label{subsec:constraints}
In this paper we are interested in the complexity of the WSP when the workflow constraint language (the set of permissible workflow constraints) is restricted.  In this section we introduce the constraint types of interest.  All of them have practical applications in the workflow problem.

We assume
that all constraints (including the unary authorization constraints) can be checked in polynomial time.  This means that it takes polynomial time to check whether any plan is authorized and whether it is valid.  The correctness of our algorithm is unaffected by this assumption, but choosing constraints not checkable in polynomial time would naturally affect the running time.
\remove{All the constraints types we define and use in this paper can be checked in polynomial time.}

%
%


\textbf{Constraints defined by a binary relation:} Constraints on two tasks, $s$ and $s'$, can be represented in the form $(s,s',\rho)$, where $\rho$ is a binary relation on $U$ \cite{Cr05}.
A plan $\pi$ satisfies such a constraint if \mbox{$\pi(s) \mathrel{\rho} \pi(s')$}.
Writing $=$ to denote the relation $\set{(u,u) : u \in U}$ and $\ne$ to denote the relation $\set{(u,v) : u,v \in U,u \ne v}$, separation-of-duty and binding-of-duty constraints may be represented in the form $(s,s',\ne)$ and $(s,s',=)$, respectively.
Crampton et al.~\cite{CrGuYeJournal} considered constraints for which $\rho$ is $\sim$ or $\nsim$, where $\sim$ is an \textit{equivalence relation} defined on $U$.
A practical example of such workflow constraints is when the equivalence relation partitions the users into different departments:   constraints could then enforce that two tasks be performed by members of the same department. Constraints that are not restricted to singleton tasks have also been considered~\cite{CrGuYeJournal,WangLi10}: 
a plan $\pi$ satisfies a constraint of the form $(S',S'',\rho)$ if there are tasks $s'\in S'$ and $s''\in S''$ such that \mbox{$\pi(s') \mathrel{\rho} \pi(s'')$}.

\textbf{Cardinality constraints:}
A {\em tasks-per-user counting constraint} has the form $(t_\ell,t_r,T)$, where $1 \leqslant t_\ell \leqslant t_r \leqslant k$ and $T\subseteq S$.  A plan $\pi$ satisfies $(t_\ell,t_r,T)$ if a user performs either no tasks in $T$ or between $t_\ell$ and $t_r$ tasks. Steps-per-user counting constraints generalize the cardinality constraints which have been widely adopted by the WSP community~\cite{ansi-rbac04,BeboFe01,JoBeLaGh05,SaCoFeYo96}. 

\textbf{Regular constraints:}
We say that $C$ is \emph{regular} if it satisfies the following condition: If a partition $S_1,\ldots ,S_p$ of $S$ is such that for every $i \in [p]$ there exists an eligible complete plan $\pi$ and user $u$ such that $\pi^{-1}(u) = S_i$,  then the plan $\bigcup_{i =1}^p (S_i \rightarrow u_i)$, where all $u_i​$'s are distinct, is eligible.
\add{Regular constraints extend
the set of constraints considered by Wang and Li \cite{WangLi10}.}
Crampton et al. \cite{CrGuYeJournal} show that the following constraints are regular: $(S',S'',\neq)$; $(S',S'',=)$, where at least one of the sets $S',S''$ is a singleton;  tasks-per-user counting constraints of the form $(t_\ell,t_r,T)$, where $t_\ell = 1$.

\textbf{User-Independent constraints:}
A constraint $(L,\Theta)$ is user-independent if whenever $\theta \in \Theta$  and $\psi : U \rightarrow U$ is a permutation then $\psi \circ \theta \in \Theta$.
In other words, user-independent constraints do not distinguish between users.

\add{ Many business rules are indifferent to the identities of the users that
complete a set of steps; they are only concerned with the relationships between
those users.  (Per-user authorization lists are the main exception to such rules.) 
The most obvious example of a user-independent constraint is the requirement that
two steps are performed by different users.  A more complex example might require
that at least three users are required to complete some sensitive set of steps.}

Every regular constraint is user-independent, but many user-independent constraints are not regular. Indeed, constraints of the type $(S',S'',=)$ are user-independent, but not necessarily regular \cite{CrGuYeJournal}. Many counting constraints in the Global Constraint Catalogue \cite{catalogue} are user-independent, but not regular. In particular, the constraint {\sc nvalue}, which bounds from above the number of users performing a set of tasks, is user-independent but not regular.
Note, however, that constraints of the form $(s',s'',\sim)$ and $(s',s'',\not\sim)$, are not user-independent, in general.

\add{
Authorization lists, when viewed as unary constraints, are not user-independent.
Thus for WSP with user-independent constraints, users are still distinguished due to the authorization lists.}

\subsection{A Naive Algorithm}\label{sec:naive-solWSP}

The main aim of this section is to present a simple algorithm (Algorithm \ref{alg:naive}) which will solve any instance of {\sc WSP}.
The running time of the algorithm is slightly worse than a brute-force algorithm, but the algorithm's basic structure provides a starting point from which to develop a more efficient algorithm.

Before proceeding further, we introduce some additional notation and terminology.

Let $\pi: T \rightarrow X$ be a  plan for some $T \subseteq S, X \subseteq U$.
Then let $\task(\pi) = T$ and $\user(\pi) = X$.
It is important for our generic algorithm that $\task(\pi)$ and $\user(\pi)$ are given as explicit parts of $\pi$. In particular, the set $\user(\pi)$ may be different from the set of users assigned to a task by $\pi$. That is, a user $u$ can be in $\user(\pi)$ without there being a task $s$ such that $\pi(s) = u$. It is worth observing that $\task(\pi)$ may be empty (because $\pi$ may not allocate any tasks to users in $X$).
 For any $T \subseteq S$ and $u \in U$, $(T \rightarrow u)$ denotes the  plan $\pi: T \rightarrow \{u\}$ such that $\pi(s)=u$ for all $s \in T$.
 
Two functions $f_1:\ D_1 \rightarrow E_1$ and $f_2:\ D_2\rightarrow E_2$ are \emph{disjoint} if $D_1\cap D_2=E_1\cap E_2=\emptyset$.
The {\em union} of two disjoint  functions $f_1: D_1\rightarrow E_1, f_2:D_2\rightarrow E_2$ is a function $f=f_1\cup f_2$ such that $f:D_1\cup D_2 \rightarrow E_1\cup E_2$ and $f(d)=f_{i}(d)$ for each $d\in D_{i},$ where $i\in \{1,2\}$. Let $g:\ D \rightarrow E$ and $h:\ E\rightarrow F$ be functions. Then $h\circ g$ denotes the composite function from $D$ to $F$ such that $h\circ g(d)=h(g(d))$ for each $d\in D.$ 
For an integer $p > 0$, the set $[p] = \{1,2, \dots, p\}$.

%
%
%
%
%
%
%
%
\remove{A pseudocode listing is presented in Algorithm~\ref{alg:naive}: given an ordering $u_1, \dots u_n$ of $U$, we construct, in order, sets $\Pi_i$ for each $i \in [n]$, where $\Pi_i$ is the set of all valid plans $\pi$ with $\user(\pi) = \{u_1, \dots, u_i\}$.
The correctness of the algorithm is easy to establish, as is its run-time (see the Appendix).}

\begin{algorithm}[!t]
 \SetKwFunction{Alg}{Win-Win}
\SetKwInOut{Input}{input}\SetKwInOut{Output}{output}

\Input{An instance $(S,U,A,C)$ of {\sc WSP}}

Construct an ordering $u_1, \dots, u_n$ of $U$\;

Set $\Pi_1 = \emptyset$\;
\ForEach{$T \subseteq S$}
{
  Set $\pi = (T \rightarrow u_1)$\;
  \If{$\pi$ is eligible and $u_1 \in A(s)$ for all $s \in T$}
  {
    Set $\Pi_1 = \Pi_1 \cup \{\pi\}$\;
  }
}
Set $i = 1$\;
\While{$i < n$}
{
  Set $\Pi_{i+1} = \emptyset$\;
  \ForEach{$\pi' \in \Pi_i$}
  {
    \ForEach{$T \subseteq S \setminus \task(\pi')$}
    {

	\If{$u_{i+1} \in A(s)$ for all $s \in T$}
	{
	  Set $\pi = \pi' \cup (T \rightarrow u_{i+1})$\;
	  \If{$\pi$ is eligible}
	  {
	    Set $\Pi_{i+1} = \Pi_{i+1} \cup \{\pi\}$\;
	  }
	}
    }
  }
  Set $i = i+1$\;
}
\ForEach{$\pi \in \Pi_n$}
{
  \If{$\task(\pi) = S$}
  {
    \Return $\pi$\;
  }
}
\Return {\sc null}\;
\caption{Naive solution for WSP}\label{alg:naive}
\end{algorithm}

\begin{proposition}\label{prop:naive}
 Let $(S,U,A,C)$ be an instance of {\sc WSP}, with $n = |U|$, $k = |S|$ and $m=|C|$.
Then $(S,U,A,C)$ can be solved in time $O^*((n+1)^k)$ by Algorithm \ref{alg:naive}.
\end{proposition}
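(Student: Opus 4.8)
The plan is to prove correctness via a single loop invariant and then read off the running time from the number of plans the algorithm ever constructs. Concretely, I would show by induction on $i$ that once $\Pi_i$ has been computed, it is exactly the set of all valid plans $\pi$ with $\user(\pi)=\{u_1,\dots,u_i\}$. Granting this, correctness is immediate: $\Pi_n$ is precisely the set of valid plans with user set $U$, and any valid complete plan, being a function $S\to U$, has $\task(\pi)=S$ and $\user(\pi)=U$, so it lies in $\Pi_n$ and is detected by the final scan; conversely anything returned is valid and passes the $\task(\pi)=S$ test, so it is a valid complete plan, and {\sc null} is output exactly when no member of $\Pi_n$ is complete.

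For the invariant itself I would treat the base case and the two inclusions of the inductive step. For $\Pi_1$: the two first-loop conditions are exactly eligibility and authorization, so every inserted $(T\to u_1)$ is valid with user set $\{u_1\}$, and conversely any valid plan with user set $\{u_1\}$ sends each assigned task to $u_1$, hence equals $(T\to u_1)$ for $T=\task(\pi)$ and is inserted. The forward inclusion for $\Pi_{i+1}$ is routine: a plan $\pi=\pi'\cup(T\to u_{i+1})$ inserted into $\Pi_{i+1}$ has $\pi'$ valid by the induction hypothesis and $(T\to u_{i+1})$ authorized by the explicit check, so $\pi$ is authorized, its eligibility is verified before insertion, and $\user(\pi)=\{u_1,\dots,u_i\}\cup\{u_{i+1}\}$ (the union being well defined since $T\subseteq S\setminus\task(\pi')$ and $u_{i+1}\notin\user(\pi')$). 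The reverse inclusion needs a small lemma. Given a valid $\pi$ with $\user(\pi)=\{u_1,\dots,u_{i+1}\}$, put $T=\pi^{-1}(u_{i+1})$ and let $\pi'$ be the restriction of $\pi$ to $\task(\pi)\setminus T$, declared with user set $\{u_1,\dots,u_i\}$. I must check $\pi'\in\Pi_i$, i.e.\ that $\pi'$ is valid; authorization is inherited by restriction, and eligibility follows from the general fact that every restriction of an eligible plan is eligible. This is immediate from the definition of satisfaction, since a constraint $(L,\Theta)$ can only be violated when its whole scope is assigned: shrinking the task set either keeps $L$ fully assigned, where $\pi'|_L=\pi|_L\in\Theta$, or makes $L\setminus\task(\pi')\neq\emptyset$, which is vacuously satisfied. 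With $\pi'\in\Pi_i$, the iteration handling this $\pi'$ and this $T$ passes the authorization and eligibility tests and reconstructs exactly $\pi'\cup(T\to u_{i+1})=\pi$, so $\pi\in\Pi_{i+1}$.

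For the running time the key is to count only the candidate plans actually formed. The map $(\pi',T)\mapsto\pi'\cup(T\to u_{i+1})$ is injective, because the resulting plan determines $T=\pi^{-1}(u_{i+1})$ and hence $\pi'$; thus the number of $(\pi',T)$ pairs processed while building $\Pi_{i+1}$ is at most the number of plans with user set $\{u_1,\dots,u_{i+1}\}$, which is $(i+2)^k$ since each of the $k$ tasks is either unassigned or sent to one of $i+1$ users. Together with the $2^k$ candidates formed for $\Pi_1$, the total number of candidate plans over the whole run is at most $\sum_{j=1}^{n}(j+1)^k\le n(n+1)^k$, and each candidate is built and tested (eligibility checks the $m$ constraints, each polynomial by assumption) in time polynomial in $n,k,m$. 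The final scan over $\Pi_n$ costs $O^*((n+1)^k)$ as well, giving the stated $O^*((n+1)^k)$ overall.

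I expect the running-time bound, rather than correctness, to be the main obstacle. The naive count of $|\Pi_i|\cdot 2^k$ per step carries a spurious factor $2^k$ which $O^*$ cannot absorb, being exponential in $k$, and would only yield $O^*((2n)^k)$. The injectivity observation, which identifies the work per step with the number of distinct plans produced rather than with the number of enumerated pairs, is exactly what removes this factor and delivers $(n+1)^k$.
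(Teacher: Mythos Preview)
Your proposal is correct and follows essentially the same approach as the paper: an inductive invariant that $\Pi_i$ is exactly the set of valid plans with user set $U_i$, followed by a per-step count of $(i+2)^k$ candidate pairs summed to $O^*((n+1)^k)$. Your correctness argument is more explicit than the paper's (which only asserts the invariant is ``not hard to verify''), and your injectivity argument for the running time is a slightly slicker packaging of the paper's binomial identity $\sum_{j} \binom{k}{j} i^j 2^{k-j} = (i+2)^k$, but the substance is the same.
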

\begin{proof}

 Let $u_1, \dots, u_n$ be an ordering of $U$, and let $U_i = \{u_1, \dots, u_i\}$ for each $i \in [n]$.
 For each $i \in [n]$ in turn, we will construct the set $\Pi_i$ of all  plans $\pi$ such that $\user(\pi) = U_i$ and $\pi$ is valid. If the set $\Pi_n$ contains no plan $\pi$ with $\task(\pi)=S$ then $(S,U,A,C)$ has no solution; otherwise, any such plan is a solution for $(S,U,A,C)$.
 
Algorithm \ref{alg:naive} shows how to construct the sets $\Pi_i$.
  It is not hard to verify that $\Pi_i$ contains exactly every valid plan $\pi$ with $\user(\pi) = U_i$, for each $i$. This implies the correctness of our algorithm. It remains to analyse the running time. 

 For each $i \in [n]$ and each $T \subseteq S$, there are at most $i^{|T|}$ valid  plans $\pi$ with $\user(\pi)=U_i, \task(\pi) = T$.  
 To construct $\Pi_1$, we need to consider all  plans $\pi$ with $\user(\pi)=U_1$, and there are exactly $2^k$ such  plans. For each  plan we can decide in polynomial time whether to add it to $\Pi_1$.
 To construct $\Pi_{i+1}$ for each $i \in [n-1]$, we need to consider every pair $(\pi', T)$ where $\pi' \in \Pi_i$ and  $T \subseteq S \setminus \task(\pi')$. 
 Consider the pair $(\pi',T)$, where $\pi'$ is an $(S',U_i)$-plan for some $S' \subseteq S$, and $T \subseteq S \setminus S'$.  Thus there are $i^{|S'|}$ possibilities for $\pi'$, and there are $2^{|S|-|S'|}$ choices for $T$.  Thus, the total number of pairs is given by $\sum_{S' \subseteq S} i^{|S'|} 2^{|S|-|S'|} = \sum_{j=0}^k \binom{k}{j} i^j 2^{k-j} = (i+2)^k$.
For each pair $(\pi', T)$ we can decide whether to add $\pi' \cup (T \rightarrow u_{i+1})$ to $\Pi_{i+1}$ in polynomial time. 
 Thus, to construct all $\Pi_i$ takes time  $O^*(\sum^{n-1}_{i=1}(i+2)^k) = O^*(n(n+1)^k) = O^*((n+1)^k)$.
\end{proof}

Algorithm \ref{alg:naive}  is inefficient even for small $k$,
\remove{In constructing $\Pi_{i+1}$,
 for each valid plan $\pi' \in \Pi_i$ we consider every possible extension of $\pi'$ to user $u_{i+1}$
In Algorithm \ref{alg:naive},} 
\add{due to the fact that} each $\Pi_i$ contains all valid plans $\pi'$ with $\user(\pi')=\{u_1,\dots,u_i\}$.
We show in the next section that it is not necessary to store so much information to solve the WSP.

\section {Plan-Indistinguishability Relations}\label{sec:ext-equiv}


We first introduce the notion of extension-equivalence, defined by an equivalence relation on the set of all plans.
Informally, the relation enables us to keep a single member of each equivalence class when building plans incrementally.

\begin{definition}\label{def:ee}
 Given an instance $(S,U,A,C)$ of {\sc WSP}, and two eligible plans $\pi_1$ and $\pi_2$, define $\pi_1 \approx \pi_2$ if the following conditions hold.
\begin{enumerate}
 \item $\user(\pi_1)=\user(\pi_2)$ and $\task(\pi_1)=\task(\pi_2)$;
 \item $\pi_1 \cup \pi'$ is eligible if and only if $\pi_2 \cup \pi'$ is eligible, for any plan $\pi'$ disjoint from $\pi_1$ and $\pi_2$.
\end{enumerate}
Then $\approx$ is an equivalence relation on the set of eligible plans and we say $\pi_1$ and $\pi_2$ are \emph{extension-equivalent} if $\pi_1 \approx \pi_2$.
\end{definition}

\add{
\begin{example}\label{ex:extEquiv}
 Consider Instance \ref{inst:mainExample}.
 
 Let $\pi_1: \{s_3,s_4\} \rightarrow \{u_2,u_4\}$ be the function such that $\pi_1(s_3) = u_2$ and $\pi_1(s_4)=u_4$.
 Let $\pi_2: \{s_3,s_4\} \rightarrow \{u_2,u_4\}$ be the function such that $\pi_2(s_3) = u_4$ and $\pi_2(s_4)=u_2$.
 
 Then $\pi_1$ and $\pi_2$ are both eligible, and $\user(\pi_1)=\user(\pi_2)$ and $\task(\pi_1)=\task(\pi_2)$. For any plan $\pi'$ disjoint from $\pi_1$ and $\pi_2$, the plan $\pi_1 \cup \pi'$ will satisfy the constraints  $(s_2,s_3, \neq), (s_1,s_4,\neq)$. Thus $\pi_1 \cup \pi'$ is eligible if and only if $\pi'$ is eligible. Similarly $\pi_2 \cup \pi'$ is eligible if and only if $\pi'$ is eligible. Thus $\pi_1 \cup \pi'$ is eligible if and only if $\pi_2 \cup \pi'$ is eligible, and so $\pi_1$ and $\pi_2$ are extension-equivalent. 
\end{example}
}

%

Suppose that we had a polynomial time algorithm to check whether two  eligible plans are extension-equivalent.
Then in Algorithm \ref{alg:naive}, we could keep track of just one  plan from each equivalence class: when constructing $\Pi_i$, we will only add $\pi_2$ to $\Pi_i$ if there is no $\pi_1$ extension-equivalent to $\pi_2$ already in $\Pi_i$; when we construct $\Pi_{i+1}$, we may use $\pi_1$ as a ``proxy'' for $\pi_2$.
%
If the number of extension-equivalent classes is small compared to the number of  plans, then the worst-case running time of the algorithm may be substantially lower than that of Algorithm~\ref{alg:naive}.

Unfortunately, it is not necessarily easy to decide if two eligible plans are extension-equivalent, so this approach is not practical.
However, we can always refine\footnote{An equivalence relation $\approx_2$ is a {\em refinement} of an equivalence relation $\approx_1$ if every equivalence class of $\approx_2$ is a subset of some equivalence class of $\approx_1.$} extension equivalence to an equivalence relation for which equivalence \textit{is} easy to determine.  For example, the identity equivalence relation where each plan is only equivalent to itself is such a refinement.

This refined equivalence relation may well have more equivalence classes than extension-equivalence, 
but substantially fewer than the identity relation, so we may
obtain a better running time than the naive algorithm.

\begin{definition}\label{def:similarity}
Given an instance $(S,U,A,C)$ of {\sc WSP},
 let $\Pi$ be the set of all eligible plans and let $\approx$ be an equivalence relation refining extension equivalence on $\Pi$.
 We say $\approx$ is a \emph{plan-\indis{} relation} (with respect to $C$) if, for all eligible $\pi_1, \pi_2$ such that $\pi_1 \approx \pi_2$, 
 and for any plan $\pi'$ disjoint from $\pi_1$ and $\pi_2$ such that $\pi_1\cup \pi'$ is eligible, we have that $\pi_1 \cup \pi' \approx \pi_2 \cup \pi'$.

 \end{definition}

%

 \remove{ As already observed, the identity relation is a trivial example of a plan-\indis{} relation.
  For regular constraints, the relation $\approx$ such that $\pi_1 \approx \pi_2$ if and only if $\user(\pi_1) = \user(\pi_2)$ and $\task(\pi_1) = \task(\pi_2)$ is a plan-\indis{} relation.
\vspace{1mm}
}
\add{
\begin{example}
Let $\approx$ be the identity relation on plans. That is, $\pi_1 \approx \pi_2$ if and only if $\user(\pi_1) = \user(\pi_2)$, $\task(\pi_1)=\task(\pi_2)$, and $\pi_1(s)=\pi_2(s)$ for all $s \in \user(\pi_1)$.
Then $\approx$ is a plan-\indis{} relation.
This shows that not every plan-\indis{} relation is the extension-equivalence relation. Indeed, the plans given in Example \ref{ex:extEquiv} are extension-equivalent but not identical.
 \end{example}
}
%
%
%
%
%

Recall that we refined extension equivalence since it may be hard to determine if two eligible plans are extension equivalent. It is therefore natural to assume the following:
 \begin{As}\label{ass:polycheckequiv}   Given a plan-\indis{} relation $\approx$, it takes polynomial time to check whether two eligible plans are equivalent under~$\approx$.
 \end{As}

The correctness of our algorithms does not depend on this assumption.  However, a poor choice of the plan-\indis{} relation could affect the running times.

We now describe appropriate plan-\indis{} relations for the constraints that we will be using.  In each case determining if two eligible plans are equivalent under~$\approx$ will take polynomial time.
%






\subsection{Plan-Indistinguishability Relation for User-Independent Constraints}

\begin{lemma}\label{lem:uirelation}
Suppose all constraints are user-independent, and
let $\approx_{ui}$ be a relation such that $\pi_1 \approx_{ui} \pi_2$ if and only if
 \begin{enumerate}
  \item $\user(\pi_1)=\user(\pi_2)$ and $\task(\pi_1)=\task(\pi_2)$;
  \item For all $s,t \in \task(\pi_1)$, $\pi_1(s)=\pi_1(t)$ if and only if $\pi_2(s)=\pi_2(t)$.
 \end{enumerate}
Then $\approx_{ui}$ is a plan-\indis{} relation on the set of eligible plans.
\end{lemma}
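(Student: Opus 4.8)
The plan is to verify the three requirements packaged into Definition~\ref{def:similarity}: that $\approx_{ui}$ is an equivalence relation, that it refines extension-equivalence, and that it enjoys the closure property. The first is routine. Reflexivity and symmetry are immediate from the symmetric form of the two defining conditions, and transitivity follows because condition~2 is a biconditional; more conceptually, $\pi_1 \approx_{ui} \pi_2$ holds exactly when $\pi_1$ and $\pi_2$ have equal user and task sets and induce the same partition of $\task(\pi_1)$ into blocks of tasks receiving a common user, and ``inducing the same partition'' is plainly an equivalence.

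The crux is the observation that user-independence makes eligibility invariant under relabelling users. So I would first isolate the auxiliary fact that if $\pi$ is eligible and $\psi : U \to U$ is any permutation, then $\psi \circ \pi$ is eligible: since $\task(\psi\circ\pi) = \task(\pi)$, each constraint $(L,\Theta)$ is either vacuously satisfied when $L \not\subseteq \task(\pi)$, or else satisfied because $(\psi\circ\pi)|_L = \psi \circ (\pi|_L) \in \Theta$ by user-independence (here I use the lemma's hypothesis that \emph{all} constraints are user-independent). To show $\approx_{ui}$ refines extension-equivalence, suppose $\pi_1 \approx_{ui} \pi_2$; condition~1 of Definition~\ref{def:ee} is given, so I must verify that $\pi_1 \cup \pi'$ is eligible iff $\pi_2 \cup \pi'$ is, for any disjoint $\pi'$. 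Writing $T = \task(\pi_1)$, the fact that $\pi_1$ and $\pi_2$ induce the same partition of $T$ means the assignment sending the common $\pi_1$-value of each block to the common $\pi_2$-value of that block is a well-defined bijection from $\pi_1(T)$ onto $\pi_2(T)$; as both sets lie in $\user(\pi_1)=\user(\pi_2)$, which is disjoint from $\user(\pi')$, this bijection extends to a permutation $\psi$ of $U$ that is the identity off $\pi_1(T)\cup\pi_2(T)$ and hence fixes $\user(\pi')$. Then $\psi\circ(\pi_1\cup\pi') = \pi_2\cup\pi'$, and applying the auxiliary fact to $\psi$ and to $\psi^{-1}$ gives the desired equivalence of eligibility.

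Finally, for the closure property, given eligible $\pi_1 \approx_{ui} \pi_2$ and disjoint $\pi'$ with $\pi_1\cup\pi'$ eligible, the refinement step already shows $\pi_2\cup\pi'$ is eligible, so both unions are eligible plans and $\approx_{ui}$ applies to them. Condition~1 of $\approx_{ui}$ follows from additivity of $\user$ and $\task$ under disjoint union. For condition~2 I would split $s,t\in\task(\pi_1)\cup\task(\pi')$ into three cases: both in $\task(\pi_1)$, where condition~2 for $\pi_1\approx_{ui}\pi_2$ applies directly; both in $\task(\pi')$, where both unions agree with $\pi'$ and the biconditional is trivially the same on each side; and one in each, where the two values lie in the disjoint sets $\user(\pi_1)$ and $\user(\pi')$ and are therefore unequal for both $\pi_1\cup\pi'$ and $\pi_2\cup\pi'$. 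I expect the main obstacle to be the permutation construction in the refinement step: one must ensure $\psi$ simultaneously realises $\psi\circ\pi_1=\pi_2$ and fixes $\pi'$, which is precisely what the disjointness of $\user(\pi_1)$ from $\user(\pi')$ makes possible.
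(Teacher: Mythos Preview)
Your proposal is correct and follows essentially the same approach as the paper: both hinge on the observation that $\approx_{ui}$-equivalent plans are related by a permutation of $U$, so user-independence transfers eligibility. The only organisational difference is that the paper first establishes the closure property $\pi_1\cup\pi' \approx_{ui} \pi_2\cup\pi'$ and then deduces eligibility equivalence from the general permutation argument, whereas you build a single permutation taking $\pi_1\cup\pi'$ to $\pi_2\cup\pi'$ directly (exploiting that $\psi$ can be chosen to fix $\user(\pi')$) and verify closure separately; both routes are equally valid.
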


\begin{proof}

By definition of user-independent constraints, if $\pi$ is an eligible plan and  $\psi : U \rightarrow U$ is a permutation, then $\psi\circ\pi$ is also eligible.
Suppose that  $\pi_1 \approx_{ui} \pi_2$ and let $T=\task(\pi_1)$ and $X=\user(\pi_1)$. Let $\psi':\pi_1(T) \rightarrow \pi_2(T)$ be a function such that  $\psi'(\pi_1(t)) = \pi_2(t)$ for any task~$t$. 
Let $\psi'': X\setminus \pi_1(T)
\rightarrow X\setminus \pi_2(T) $ be an arbitrary bijection (note that
$|\pi_1(T)|=|\pi_2(T)|$ by Condition 2 of $\approx_{ui}$).
Let $\psi = \psi' \cup \psi''$.
Then $\psi$ is a permutation such that $\pi_2 = \psi \circ \pi_1$. Thus $\pi_1$ is eligible if and only if $\pi_2$ is eligible.

 Now consider two eligible plans $\pi_1, \pi_2$ such that $\pi_1 \approx_{ui} \pi_2$, and a plan $\pi'$ disjoint from $\pi_1$ and $\pi_2$.
First we show that $\pi_1 \cup \pi' \approx_{ui} \pi_2 \cup \pi'$. It is clear that $\user(\pi_1 \cup \pi') = \user(\pi_2 \cup \pi')$ and $\task(\pi_1 \cup \pi') = \task(\pi_2 \cup \pi')$. Now for any $s,t \in \user(\pi_1 \cup \pi')$, if $(\pi_1 \cup \pi')(s) = (\pi_1 \cup \pi')(t)$, then either $s,t$ are both in $\task(\pi')$, in which case $(\pi_2 \cup \pi')(s) = (\pi_2 \cup \pi')(t)$ trivially, or $s,t$ are both in $\task(\pi_1)$, in which case $\pi_2(s)=\pi_2(t)$ since $\pi_1 \approx_{ui} \pi_2$, and hence $(\pi_2 \cup \pi')(s) = (\pi_2 \cup \pi')(t)$. Thus if  $(\pi_1 \cup \pi')(s) = (\pi_1 \cup \pi')(t)$ then $(\pi_2 \cup \pi')(s) = (\pi_2 \cup \pi')(t)$, and by a similar argument the converse holds. Thus $\pi_1 \cup \pi' \approx_{ui} \pi_2 \cup \pi'$.
 Furthermore, it follows by the argument in the first paragraph that  $\pi_1 \cup \pi'$ is eligible if and only if $\pi_2 \cup \pi'$ is eligible. Thus, the condition of Definition \ref{def:similarity} and the second condition of Definition \ref{def:ee} hold.

The first condition of $\approx_{ui}$ trivially satisfies the first condition of Definition \ref{def:ee}. 
 Thus, $\approx_{ui}$ satisfies all the conditions of a plan-\indis{} relation.
\end{proof}

\add{
\begin{example}\label{ex:uiExample}
 Consider an instance of {\sc WSP} with users $u_1, \dots u_6$ and tasks $s_1, \dots, s_6$ in which all constraints are user-independent. Let $\approx_{ui}$ be the plan-\indis{} relation given by Lemma \ref{lem:uirelation}.
 Let $c_1$ be the constraint with scope $\{s_2,s_3,s_4,s_5\}$ such that $c_1$ is satisfied if and only if an even number of users are assigned to tasks in $\{s_2,s_3,s_4,s_5\}$.
 Let $c_2$ be the constraint with scope $\{s_1,s_3,s_4,s_6\}$ such that $c_2$ is satisfied if and only if either $s_1$ and $s_3$ are assigned to different users, or $s_4$ and $s_6$ are assigned to different users.
 Suppose that $c_1$ and $c_2$ are the only constraints whose scope contains tasks from both $\{s_1,s_2,s_3\}$ and $\{s_4,s_5,s_6\}$.
 
 Now consider the plans $\pi_1,\pi_2: \{s_1,s_2,s_3\} \rightarrow \{u_1,u_2,u_3,u_4\}$ such that $\pi_1(s_1)=u_1, \pi_1(s_2)=u_2, \pi_1(s_3)=u_1$, and $\pi_2(s_1)=u_3, \pi_2(s_2)=u_4, \pi_2(s_3)=u_3$, and suppose that $\pi_1,\pi_2$ are both eligible.
 Then $\pi_1$ and $\pi_2$ are equivalent under $\approx_{ui}$.
 
 Observe that for any plan $\pi'$ disjoint from $\pi_1$ and $\pi_2$, $\pi_1 \cup \pi'$ is eligible if and only if $\pi_2 \cup \pi'$ is eligible. As $\pi_1$ and $\pi_2$ both assign two users to $\{s_2,s_3\}$, $\pi'$ must assign two users to $\{s_4,s_5\}$ in order to satisfy $c_1$. As $\pi_1$ and $\pi_2$ both assign $s_1$ and $s_3$ to the same user, $\pi'$ must assign $s_4$ and $s_5$ to different users in order to satisfy $c_2$. 
 As long as these conditions are satisfied, and $\pi'$ satisfies all constraints with scope in $\{s_4,s_5,s_6\}$, then $\pi_1 \cup \pi'$ and $\pi_2 \cup \pi'$ will both be eligible.
\end{example}
}

\subsection{Plan-Indistinguishability Relation for Equivalence Relation Constraints}

Recall that given a binary relation $\rho$ on $U$, a constraint of the form $(s_i,s_j, \rho)$ is satisfied by a plan $\pi$ if $\pi(s_i)~\rho~\pi(s_j)$. 
Recall that such constraints are not user-independent in general.

\begin{lemma}\label{lem:equivRelation} 
 Suppose $\sim$ is an equivalence relation on $U$.
 Let $V_1, \dots, V_l$ be the equivalence classes of $\sim$ over $U$.
Suppose all constraints are of the form $(s_i,s_j, \sim)$ or $(s_i,s_j, \not\sim)$.
Let $\approx_{e}$ be a relation such that $\pi_1 \approx_{e} \pi_2$ if and only if
 \begin{enumerate}
  \item $\user(\pi_1)=\user(\pi_2)$ and $\task(\pi_1)=\task(\pi_2)$;
  \item For all equivalence classes $V_j$
  such that $V_j \cap \user(\pi_1) \neq \emptyset$ and $V_j \setminus \user(\pi_1) \neq \emptyset$,
  we have that
  for all $s \in \task(\pi_1)$, $\pi_1(s) \in V_j$ if and only if $\pi_2(s) \in V_j$. 
 \end{enumerate}
Then $\approx_{e}$ is a plan-\indis{} relation.

\end{lemma}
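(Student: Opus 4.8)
The plan is to verify directly the three requirements packaged into Definition~\ref{def:similarity}: that $\approx_{e}$ is an equivalence relation on the set of eligible plans, that it refines extension-equivalence (Definition~\ref{def:ee}), and that it satisfies the closure property $\pi_1 \cup \pi' \approx_{e} \pi_2 \cup \pi'$. Reflexivity, symmetry and transitivity of $\approx_{e}$ are routine; the only point to note is that $\user(\pi_1)=\user(\pi_2)$ makes ``split by $\user(\pi_1)$'' and ``split by $\user(\pi_2)$'' the same collection of classes, so that the membership condition in Condition~2 is symmetric and can be chained for transitivity. The conceptual core that drives everything else is the following observation: for each plan $\pi$ every value $\pi(s)$ lies in $\user(\pi)$, so relative to $\user(\pi_1)$ each class $V_j$ is either fully inside $\user(\pi_1)$, disjoint from it, or genuinely split; and a binary $\sim$/$\not\sim$ constraint joining a task already assigned by $\pi_1$ to a task assigned by a disjoint extension $\pi'$ can only ``see'' a class that is split or disjoint, never one that is fully inside. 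This is precisely why Condition~2 records membership in split classes only, and why two eligible plans may differ on fully-inside classes yet remain indistinguishable.

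For the refinement step, which I expect to be the heart of the proof, I would fix eligible $\pi_1 \approx_{e} \pi_2$ and an arbitrary plan $\pi'$ disjoint from both, and compare eligibility of $\pi_1 \cup \pi'$ and $\pi_2 \cup \pi'$ constraint by constraint. Since every constraint has scope $\{s_i,s_j\}$ and is active only when both tasks are assigned, I split into three cases. If both $s_i,s_j \in \task(\pi_1)$ the constraint is settled by the eligibility of $\pi_1$ (resp.\ $\pi_2$); if both lie in $\task(\pi')$ it is settled by $\pi'$ alike for both unions. The only case requiring work is the cross case $s_i \in \task(\pi_1)$, $s_j \in \task(\pi')$, where satisfaction amounts to whether $\pi_1(s_i)$ and $\pi'(s_j)$ (resp.\ $\pi_2(s_i)$ and $\pi'(s_j)$) lie in a common class. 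Here disjointness forces $\pi'(s_j) \notin \user(\pi_1)$, so its class $V$ satisfies $V \setminus \user(\pi_1) \neq \emptyset$; if moreover $V \cap \user(\pi_1) = \emptyset$ then neither $\pi_1(s_i)$ nor $\pi_2(s_i)$ lies in $V$ and both plans read $\not\sim$, whereas if $V$ is split then Condition~2 gives $\pi_1(s_i)\in V \iff \pi_2(s_i)\in V$, so both plans agree on $\sim$. In every case the constraint is satisfied by $\pi_1 \cup \pi'$ exactly when it is by $\pi_2 \cup \pi'$, giving Condition~2 of Definition~\ref{def:ee}; Condition~1 is immediate.

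For the closure property I would check $\pi_1 \cup \pi' \approx_{e} \pi_2 \cup \pi'$ directly. Condition~1 is clear. For Condition~2, let $W = \user(\pi_1)\cup\user(\pi')$ and take any class $V_j$ split by $W$; tasks of $\pi'$ are assigned identically by both unions, so it suffices to treat $s\in\task(\pi_1)$ and show $\pi_1(s)\in V_j \iff \pi_2(s)\in V_j$. The key sub-step is a reclassification: since $W \supseteq \user(\pi_1)$, being split by $W$ forces $V_j \setminus \user(\pi_1)\neq\emptyset$, so $V_j$ is either split by $\user(\pi_1)$ (when $V_j\cap\user(\pi_1)\neq\emptyset$, where Condition~2 of $\pi_1\approx_e\pi_2$ applies) or disjoint from $\user(\pi_1)$ (where neither $\pi_1(s)$ nor $\pi_2(s)$ lies in $V_j$). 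Both possibilities yield the required equivalence.

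The main obstacle, and the step most likely to hide an error, is keeping straight which equivalence classes are ``visible'' at each stage: the refinement step classifies classes relative to $\user(\pi_1)$, whereas the closure step forces a reclassification relative to the enlarged user set $W$, and one must confirm that a class newly split by $W$ always falls into a case already governed by Condition~2 for $\pi_1\approx_e\pi_2$. The supporting but essential subtlety is that internal constraints never need tracking, because both $\pi_1$ and $\pi_2$ are assumed eligible; this is exactly what licenses the relation to ignore the fully-inside classes entirely.
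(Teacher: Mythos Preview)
Your proposal is correct and follows essentially the same approach as the paper: a constraint-by-constraint case analysis on the scope $\{s_i,s_j\}$ (both in $\task(\pi_1)$, both in $\task(\pi')$, cross), with the cross case handled by classifying the $\sim$-class of the $\pi'$-value as split by or disjoint from $\user(\pi_1)$, followed by a direct verification of $\pi_1\cup\pi' \approx_e \pi_2\cup\pi'$ via the reclassification relative to the enlarged user set. Your explicit framing of the ``fully inside / disjoint / split'' trichotomy and the remark that $\approx_e$ is an equivalence relation are a touch more detailed than the paper's proof, but the substance is the same.
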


\begin{proof}
 It is clear that $\approx_{e}$ satisfies the first condition of Definition \ref{def:ee}.
 Now suppose $\pi_1, \pi_2$ are eligible plans such that $\pi_1 \approx_{e} \pi_2$, and let $\pi'$ be a plan disjoint from $\pi_1$ and $\pi_2$. We first show that $\pi_1 \cup \pi'$ is eligible if and only if $\pi_2 \cup \pi'$ is eligible.

 Suppose that $\pi_1 \cup \pi'$ is eligible.
Consider two tasks $t, t' \in \task(\pi_2 \cup \pi')$.  If $\set{t,t'} \subseteq \task(\pi')$ then $\pi_2 \cup \pi'$ will not falsify any constraint on $t$ and $t'$ since it is equal to $\pi_1 \cup \pi'$ when restricted to $\{t,t'\}$ and $\pi_1 \cup \pi'$ is eligible.  If $\set{t,t'} \subseteq \task(\pi_2)$, then $\pi_2 \cup \pi'$ will not break any constraints since $\pi_2$ is eligible.

So we may assume that $t \in \task(\pi_2)$, $t' \in \task(\pi')$.  By definition, $(\pi_2 \cup \pi')(t) \sim (\pi_2 \cup \pi')(t')$ if and only if there exists $j \in [l]$ such that $\pi_2(t), \pi'(t') \in V_j$. But then $V_j \cap \user(\pi_2) \neq \emptyset$ and $V_j \setminus \user(\pi_2) \neq \emptyset$. Therefore, by definition of $\approx_{e}$, $\pi_1(s) \in V_j$ if and only if $\pi_2(s) \in V_j$, for all $s \in \task(\pi_1)$. In particular, $\pi_1(t) \in V_j$, and so $(\pi_1 \cup \pi')(t) \sim (\pi_1 \cup \pi')(t')$. By a similar argument, if $(\pi_1 \cup \pi')(t) \sim (\pi_1 \cup \pi')(t')$ then $(\pi_2 \cup \pi')(t) \sim (\pi_2 \cup \pi')(t')$. Therefore, every constraint is satisfied by $(\pi_1 \cup \pi')$ if and only if it is satisfied by $(\pi_2 \cup \pi')$.
Therefore if $\pi_1 \cup \pi'$ is eligible then so is $\pi_2 \cup \pi'$, and by a similar argument the converse holds.

It remains to show that $\pi_1 \cup \pi' \approx_{e} \pi_2 \cup \pi'$.
It is clear that the user and task sets are the same.
As they have the same user set, the sets
$\set{V_j : V_j \cap \user(\pi_1 \cup \pi') \neq \emptyset, V_j \setminus \user(\pi_1 \cup \pi') \neq \emptyset}$ and $\set{V_j : V_j \cap \user(\pi_2 \cup \pi') \neq \emptyset, V_j \setminus \user(\pi_2 \cup \pi') \neq \emptyset}$ are the same.
Furthermore, for each $V_j$ in this set and any $s \in \task(\pi_1 \cup \pi')$, 
if $(\pi_1 \cup \pi')(s) \in V_j$ then $(\pi_2 \cup \pi')(s) \in V_j$, as either $s \in \task(\pi_1)$, in which case $V_j \cap \user(\pi_1) \neq \emptyset, V_j \setminus \user(\pi_1)  \neq \emptyset$ and so $\pi_2(s) \in V_j$, or $s \in \task(\pi')$, in which case  $(\pi_2 \cup \pi')(s) = \pi'(s) = (\pi_1 \cup \pi')(s)$. By a similar argument, if $(\pi_2 \cup \pi')(s) \in V_j$ then $(\pi_1 \cup \pi')(s) \in V_j$. Thus $\pi_1 \cup \pi' \approx_{e} \pi_2 \cup \pi'$.
%
\end{proof}

\add{

\begin{example}\label{ex:equivExample}
 Let $\sim$ be an equivalence relation on users with equivalence classes $\{u_1\}, \{u_2\}, \{u_3,u_4, u_5\}, \{u_6, u_7,u_8\}$.
 Consider an instance of {\sc WSP} with users $u_1, \dots, u_8$ and tasks $s_1, \dots, s_6$ in which all constraints are of the form $(s_i,s_j, \sim)$ or $(s_i,s_j, \not\sim)$. Let $\approx_{e}$ be the plan-\indis{} relation given by Lemma \ref{lem:equivRelation}.
 Suppose that the only constraints whose scope contains tasks from both $\{s_1,s_2,s_3\}$ and $\{s_4,s_5,s_6\}$ are the constraints $(s_1,s_5, \not\sim)$,  $(s_2,s_5, \sim)$ and $(s_2,s_6,\not\sim)$.
 
  Now consider the plans $\pi_1,\pi_2: \{s_1,s_2,s_3\} \rightarrow \{u_1,u_2,u_3,u_4\}$ such that $\pi_1(s_1)=u_1, \pi_1(s_2)=u_3, \pi_1(s_3)=u_3$, and $\pi_2(s_1)=u_2, \pi_2(s_2)=u_3, \pi_2(s_3)=u_4$, and suppose that $\pi_1,\pi_2$ are both eligible.
 Then $\pi_1$ and $\pi_2$ are equivalent under $\approx_{e}$.
 
  Observe that for any plan $\pi'$ disjoint from $\pi_1$ and $\pi_2$, $\pi_1 \cup \pi'$ is eligible if and only if $\pi_2 \cup \pi'$ is eligible. 
  The only $\sim$-equivalence class with members in $\{u_1,u_2,u_3,u_4\}$ and members not in $\{u_1,u_2,u_3,u_4\}$ is the class $\{u_3,u_4,u_5\}$.
  $\pi_1$ and $\pi_2$ both assign members of $\{u_3,u_4\}$ to exactly the set $\{s_2,s_3\}$.
  Thus for any plan $\pi'$ disjoint from $\pi_1$ and $\pi_2$, $\pi_1 \cup \pi'$ and $\pi_2 \cup \pi'$ will both satisfy the constraint $(s_1,s_5,\not\sim)$ whatever $\pi'$ assigns to $s_5$. 
  They will both satisfy  $(s_2,s_5, \sim)$ only if $\pi'$ assigns $s_5$ to $u_5$, and they will both satisfy $(s_2,s_6, \not\sim)$ only if $\pi'$ does not assign $s_6$ to $u_5$.
 As long as these conditions are satisfied, and $\pi'$ satisfies all constraints with scope in $\{s_4,s_5,s_6\}$, then $\pi_1 \cup \pi'$ and $\pi_2 \cup \pi'$ will both be eligible.
\end{example}

}

 \section{A Generic Algorithm for {\sc WSP}}\label{sec:solWSP}

In what follows, for each $X \subseteq U, T \subseteq S$, we let $\Pi[X,T]$ denote the set of  all eligible plans $\pi$ with $ \user(\pi)=X$ an $\task(\pi)=T$.
 In this section we will introduce an algorithm that works in a similar way to Algorithm \ref{alg:naive}, except that instead of storing all valid plans over a particular set of users or tasks, 
 we will construct $\Pi[X,T]$-\repset s for each task set $T$ and certain user sets $X$.
\add{By definition, the
equivalence classes of any plan-indistinguishability relation necessarily partition
$\Pi[X,T]$.  Hence any such equivalence class has a representation of the form
$(X,T,\cdot)$, where $\cdot$ is dependent on the constraint language.  In the remainder
of this section we describe the algorithm and give examples of these
representations.}

\subsection{Encodings and patterns}

\add{In our generic algorithm, we will construct plans iteratively, using at most one plan from each equivalence class under a plan-\indis{} relation. The running time of the algorithm will depend on the number of equivalence classes under this relation, over certain sets of plans. 
To ensure that sets of equivalence classes can be ordered and therefore searched and sorted efficiently,
we introduce the notion of encodings and patterns. Loosely speaking, an \emph{encoding} is a function that maps all the plans in a $\approx$-equivalence class to the same element (the \emph{pattern} of those plans). These encodings ensure logarithmic-time access and insertion operations into a representative set of plans, rather than the linear time that a naive method would allow.

Note that the use of encodings and patterns is not necessary for any of our fixed-parameter tractability results; the same problems could be solved without the use of patterns and encodings in fixed-parameter time, but the function in $k$ would grow more quickly.}

 
%
%


\remove{
The time taken for searching a set for an element or inserting an element into a set is polynomial in the size of a 
set.  Therefore, when considering the running-time of fixed-parameter 
algorithms and using $O^*$ notation, we may ignore the costs of searching in a set and inserting into a set if its cardinality is polynomial in the size of the inputs to the 
problem.  In the case of WSP, the size of the sets of users, tasks and 
constraints are inputs to the problem, but the number of plans is 
$O(n^k)$ in the worst case.  Thus, searching in a set of plans and inserting into a set of plans (or a 
representative set) must be explicitly accounted for in the running time 
of an algorithm to solve WSP.}

%

\remove{For example, we will often want to determine, given a plan $\pi$ and some set of plans $\Pi^*$, whether $\Pi^*$ contains a plan $\pi'$ with $\pi' \approx \pi$.
This can naively be done in time $O^*(|\Pi^*|)$ by checking whether $\pi' \approx \pi$ for every $\pi' \in \Pi^*$.
However, using the techniques described below, it is possible to do this in time  $O^*(\log(|\Pi^*|))$.
This makes a big difference when the size of $\Pi^*$ is, for example, $2^{O(k \log k)}$, as will be the case when we consider user-independent constraints.}

\remove{Thus to improve the running time of our algorithm, we introduce the notion of encodings and patterns. Loosely speaking, an \emph{encoding} is a function that maps all the plans in a $\approx$-equivalence class to the same element (the \emph{pattern} of those plans).
By identifying all $\approx$-equivalent plans with the same pattern, we allow for some savings in the running time of our set operations. In particular, checking whether a set of plans contains a plan equivalent to a particular plan $\pi$ can be done more efficiently using patterns.
}

%


\begin{definition}\label{def:pat}
Given an instance $(S,U,A,C)$ of {\sc WSP} and a plan-\indis{} relation $\approx$,
let $\Pi$ be the set of all  plans.
 Let $\pat$ be some set and consider a function $\enc: \Pi \rightarrow \pat$.
 For any $X \subseteq U, T \subseteq S$, let $\pat[X,T] = \enc(\Pi[X,T])$.
 Then we say $\enc$ is a \emph{$\approx$-encoding} (or an \emph{encoding for $\approx$}) if, for  any $X \subseteq U, T \subseteq S$ and any $\pi_1, \pi_2 \in \Pi[X,T]$, we have that
 \begin{enumerate}
  \item $\enc(\pi_1)=\enc(\pi_2)$ if and only if $\pi_1 \approx \pi_2$;
  \item $\enc(\pi_1)$ can be calculated in time polynomial in $n,k,m$;
  \item There exists a linear ordering $\preceq$ on $\pat[X,T]$ such that, for $p,p' \in \pat[X,T]$, we can decide whether $p \preceq p'$ in time polynomial in $n,k,m$.
 \end{enumerate}
The elements of $\pat$ are called \emph{$\approx$-patterns}. If $\enc(\pi)=p$ then we say $p$ is the \emph{$\approx$-pattern of $\pi$}.
\end{definition}

The second and third conditions of Definition~\ref{def:pat} ensure that we may use encodings to organise our plans in  a reasonable time.
When $\approx$ is clear from the context, we will refer to a $\approx$-encoding as an \emph{encoding} and $\approx$-patterns as \emph{patterns}.

We note some  complexity consequences of Definition~\ref{def:pat} in the following:

\begin{proposition}\label{ass:settimes}
For an encoding of a plan-\indis{} relation $\approx$ \add{and a set of patterns $\pat^*$}, by assigning patterns in $\pat^*$ to the nodes of a balanced binary tree, we can perform the following two operations in time $O^*(\log(|\pat^*|))$: 
(i) check whether $p \in \pat^*$, and (ii) insert a pattern $p\notin \pat^*$ into $\pat^*$ .
\end{proposition}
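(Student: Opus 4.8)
The plan is to store the patterns of $\pat^*$ in a self-balancing binary search tree (such as an AVL or red--black tree), using the linear ordering $\preceq$ supplied by the third condition of Definition~\ref{def:pat} as the comparison key. The central observation is that both operations reduce to a logarithmic number of comparisons of the form ``is $p \preceq p'$?'', each of which is computable in time polynomial in $n,k,m$; the polynomial factors are then absorbed into the $O^*$ notation.

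First I would fix the ordering. Here it is implicit that the patterns in $\pat^*$ all belong to a single set $\pat[X,T]$, as they do in the algorithm where $\pat^*$ is built up from the patterns of eligible plans sharing a common user set $X$ and task set $T$, so that $\preceq$ is a total order on all of $\pat^*$; I would state this assumption explicitly. Given this, I maintain the tree so that at every node the stored pattern is $\preceq$-greater than all patterns in its left subtree and $\preceq$-smaller than all patterns in its right subtree, and so that the tree remains balanced with height $O(\log |\pat^*|)$.

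For operation (i), a membership query for $p$ is a standard descent from the root: at each node I compare $p$ with the node's pattern using $\preceq$ (twice, to distinguish $p \prec p'$, $p = p'$ and $p \succ p'$, using that $p = p'$ iff $p \preceq p'$ and $p' \preceq p$), and branch left, stop, or branch right accordingly. Since the tree has height $O(\log |\pat^*|)$, the descent visits $O(\log |\pat^*|)$ nodes and performs $O(\log |\pat^*|)$ comparisons. For operation (ii), I perform the same descent to locate the insertion position of $p \notin \pat^*$, attach a new leaf, and then rebalance; for a standard balanced tree the rebalancing consists of $O(\log |\pat^*|)$ rotations and pointer updates, each touching only a constant number of nodes.

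The running time bound then follows by multiplication. Each comparison costs $\mathrm{poly}(n,k,m)$ by the third condition of Definition~\ref{def:pat}, and each structural step (rotation, pointer update) costs at most $\mathrm{poly}(n,k,m)$ since a pattern has a representation of polynomial size; the total over $O(\log |\pat^*|)$ steps is therefore $O(\log(|\pat^*|) \cdot \mathrm{poly}(n,k,m)) = O^*(\log(|\pat^*|))$. The only point needing care --- and it is essentially the whole content of the argument --- is the reduction of both set operations to a logarithmic number of $\preceq$-comparisons, which is exactly what balancedness of the search tree guarantees; beyond invoking the standard analysis of balanced binary search trees together with the polynomial-time comparability granted by the encoding definition, there is no genuine obstacle.
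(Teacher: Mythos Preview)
Your proposal is correct and takes essentially the same approach as the paper: invoke the polynomial-time comparability guaranteed by the encoding definition and then appeal to the standard $O(\log N)$ search and insertion bounds for balanced binary search trees. The paper's own proof is a two-sentence pointer to \cite{CLRS01}, so your version is simply a more explicit unpacking of the same argument.
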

\begin{proof}
Recall that comparisons are polynomial in $n,k, m$. Now our result follows from the well-known properties  of balanced binary trees, see, e.g., \cite{CLRS01}. 
\end{proof}

%
%
%
%
%

We now show that the plan-\indis{} relations given in the previous section have encodings.  We first need to define a lexicographic ordering.

\begin{definition}
Given a totally ordered set $(A,\leq)$, the (total) 
\emph{lexicographic order} $\leq$ on $d$-tuples from $A^d$ is defined as follows.
We say that $(x_1,\dots,x_d) \leq (y_1,\dots, y_d)$ if 
either $x_j=y_j$ for all $j\in [d]$ or there is an $i$ with $x_i<y_i$ such
that $x_j=y_j$ for all $j<i$.

Taking $A = \mathbb{N}$ and $d = k$ we obtain the natural lexicographic order on $\mathbb{N}_0^k$.
\end{definition}

We can also lexicographically order the sets of disjoint subsets of an ordered set $T = \{t_1,\dots,t_k\}$, where $t_1 < \dots < t_k$.

\begin{definition}
We associate a $k$-tuple  $(x_1,\dots , x_k) \in \mathbb{N}_0^k$ with each set ${\cal S}$ of disjoint subsets $\{S_1,\ldots,S_r\}$ of $\{t_1,\dots,t_k\}$
as follows. We have $x_i =0$ if $t_i \notin \cup_{m=1}^{r} S_m$. For $t_i \in \cup_{m=1}^{r} S_m$,
\begin{itemize}
\item if there are $j<i$ and $m$ such that $\{t_i, t_j\} \subseteq S_m$ then $x_i = x_j$,
\item otherwise $x_{i} = \max\{x_1,\dots,x_{i-1}\} + 1$, where $\max \emptyset = 0$. 
\end{itemize}
We will write $\vect({\cal S})=(x_1,\dots , x_k)$. 
Note that $\vect({\cal S})$ can be computed  in time $O(k^2)$.
\end{definition}

For example, for $T = \{1,\dots,8\}$ and the sets ${\cal A} = \{\{2,4\},\{3\},\{5,7\}\}$ and ${\cal B} = \{\{2,3,4\},\{5,7\}\}$, we have $\vect({\cal A})=(0,1,2,1,3,0,3,0)$ and  $\vect({\cal B})=(0,1,1,1,2,0,2,0)$.  So $\cal A$ is lexicographically bigger than $\cal B$.

\begin{corollary}\label{cor:uiencoding}
 Let $\approx_{ui}$ be the plan-indistinguishability relation given for a set of user-independent constraints in Lemma \ref{lem:uirelation}. Then there exists an encoding for $\approx_{ui}$.
\end{corollary}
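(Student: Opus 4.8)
The plan is to encode each plan by the partition of its task set that records which tasks are assigned to a common user, and to reuse the vector $\vect(\cdot)$ defined above as the canonical label of such a partition. Concretely, for a plan $\pi$ I would let $\mathcal{S}_\pi = \{\pi^{-1}(u) : u \in \user(\pi)\} \setminus \{\emptyset\}$ be the set of non-empty preimages of users under $\pi$. This is a set of pairwise disjoint subsets of $\task(\pi) \subseteq S$, so fixing the ordered ground set $S = \{s_1,\dots,s_k\}$ I can define $\enc(\pi) = \vect(\mathcal{S}_\pi) \in \mathbb{N}_0^k$ and take $\pat = \mathbb{N}_0^k$.

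Next I would verify the three conditions of Definition~\ref{def:pat}. For the first, fix $X \subseteq U$, $T \subseteq S$ and $\pi_1, \pi_2 \in \Pi[X,T]$. Both plans have task set $T$, so the first condition of $\approx_{ui}$ holds automatically, and by the second condition of $\approx_{ui}$ (Lemma~\ref{lem:uirelation}) we have $\pi_1 \approx_{ui} \pi_2$ exactly when $\pi_1$ and $\pi_2$ induce the same partition of $T$, i.e.\ when $\mathcal{S}_{\pi_1} = \mathcal{S}_{\pi_2}$. It therefore suffices to argue that $\vect$ is injective on partitions of $T$, which is where I expect the only genuine content to lie. The key observation is that $\vect(\mathcal{S})$ is the restricted-growth encoding of $\mathcal{S}$: the label of each element is forced to equal the label of the least-indexed element of its block, and a fresh value is introduced precisely the first time a new block is encountered. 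Hence the map from set partitions of $T$ to vectors is injective, so $\vect(\mathcal{S}_{\pi_1}) = \vect(\mathcal{S}_{\pi_2})$ if and only if $\mathcal{S}_{\pi_1} = \mathcal{S}_{\pi_2}$. Combining the two equivalences yields $\enc(\pi_1) = \enc(\pi_2)$ if and only if $\pi_1 \approx_{ui} \pi_2$.

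The remaining two conditions follow directly from the properties already recorded for $\vect$ and the lexicographic order. For the second condition, $\mathcal{S}_\pi$ is read off from $\pi$ in polynomial time and $\vect(\mathcal{S}_\pi)$ is computable in time $O(k^2)$, so $\enc$ is polynomial-time computable. For the third, I would take $\preceq$ to be the natural lexicographic order on $\mathbb{N}_0^k$ restricted to $\pat[X,T]$; comparing two $k$-tuples lexicographically takes $O(k)$ time, which is polynomial in $n,k,m$. Thus $\enc$ satisfies all three conditions and is a $\approx_{ui}$-encoding.

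The main obstacle is purely the bookkeeping check that $\vect$ separates distinct partitions; everything else is immediate from the definition of $\approx_{ui}$ and the stated properties of $\vect$ and the lexicographic order. One point worth flagging is that the equivalence in the first condition of Definition~\ref{def:pat} is only required within a single $\Pi[X,T]$, so I do not need $\enc$ to distinguish plans with differing user sets — which it would not, since $\enc(\pi)$ depends only on the task partition — and this causes no difficulty.
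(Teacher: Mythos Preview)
Your proposal is correct and follows essentially the same route as the paper: both encode a plan by applying $\vect(\cdot)$ to the partition $\{\pi^{-1}(u):u\in\user(\pi)\}$ of $\task(\pi)$ and order patterns lexicographically. The only cosmetic difference is that the paper packages the encoding as the triple $(\user(\pi),\task(\pi),\vect(\pi))$ whereas you use the bare vector; as you observe, Definition~\ref{def:pat} only demands the biconditional within a fixed $\Pi[X,T]$, so the extra components are redundant and your streamlined version is equally valid.
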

\begin{proof}
 Let $s_1,\ldots ,s_k$ be an ordering of $S$ and $\pi$ a plan. Let ${\cal S}^{\pi}=\{\pi^{-1}(u):\ u\in \user(\pi)\}$  and let $\vect(\pi)=\vect({\cal S}^{\pi}).$
For a  plan $\pi$, 
let $\enc(\pi)$ be the tuple $(\user(\pi), \task(\pi), \vect(\pi))$. 

It is clear that $\enc(\pi_1) = \enc(\pi_2)$ if and only if $\pi_1 \approx_{ui} \pi_2$, as $\pi_r(s_i) = \pi_r(s_j)$ if and only if $y_i = y_j$ in $\vect(\pi_r) = (y_1,\dots,y_k)$, for $r \in \{1,2\}$. Furthermore it is clear that $\enc(\pi)$ can be determined in polynomial time for any $\pi$. 

It remains to define a linear ordering on $\pat[X,T]$ for a given $X \subseteq U, T \subseteq S$.
For two patterns $p =  (X, T, (x_1, \dots, x_k)), p' =  (X, T, (y_1, \dots, y_k)) \in \pat[X,T]$, we define $p \preceq p'$ if $(x_1, \dots, x_k) \leq (y_1, \dots, y_k)$.
\end{proof}

\begin{example}
 Let $\enc$ be the encoding given in the proof of Corollary \ref{cor:uiencoding}.
 Let $\pi_1,\pi_2$ be the plans given in Example \ref{ex:uiExample}.
 Then $\enc(\pi_1) = \enc(\pi_2) =\{\{u_1,u_2,u_3, u_4\}, \{s_1,s_2,s_3\}, (1,2,1,0,0,0)\}$.
\end{example}

\begin{corollary}\label{cor:nestedEquivEncodingencoding}
 Let $\approx_{e}$ be the plan-indistinguishability relation given for a set of constraints on equivalence relations in Lemma \ref{lem:equivRelation}. Then there exists an encoding for $\approx_{e}$.
\end{corollary}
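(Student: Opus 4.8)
The plan is to mirror the proof of Corollary~\ref{cor:uiencoding}. Fix an ordering $s_1,\dots,s_k$ of $S$ and define $\enc(\pi) = (\user(\pi),\task(\pi),(z_1,\dots,z_k))$, where the tuple $(z_1,\dots,z_k) \in \mathbb{N}_0^k$ records, for each task, the index of the $\sim$-class containing its assigned user, but only for those classes that are split by the user set in the sense of Lemma~\ref{lem:equivRelation}. Concretely, writing $X = \user(\pi)$ and $T = \task(\pi)$, I set $z_i = j$ whenever $s_i \in T$ and $\pi(s_i) \in V_j$ with $V_j \cap X \neq \emptyset$ and $V_j \setminus X \neq \emptyset$, and $z_i = 0$ otherwise (that is, when $s_i \notin T$, or when $\pi(s_i)$ lies in a class not split by $X$). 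Note that this is actually simpler than the encoding of Corollary~\ref{cor:uiencoding}: since the $\sim$-classes $V_1,\dots,V_l$ are fixed global objects, no canonical relabelling of the kind performed by $\vect$ is required.

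Next I would verify conditions (2) and (3) of Definition~\ref{def:pat}. For (2), computing the tuple needs only, for each task, identifying the $\sim$-class of its user and testing whether that class is split by $X$, which is polynomial in $n,k$ (and $l \le n$). For (3), since any two patterns in $\pat[X,T]$ share the same first two coordinates, I would order them by the lexicographic order on the tuples in $\mathbb{N}_0^k$, which is a linear order decidable in polynomial time.

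The crux is condition (1): for $\pi_1,\pi_2 \in \Pi[X,T]$, that $\enc(\pi_1)=\enc(\pi_2)$ iff $\pi_1 \approx_{e} \pi_2$. Since both plans share $X$ and $T$, the collection of classes split by $X$ is the same for both, so it suffices to compare the $z$-tuples. I would argue both directions. If the tuples agree, then for every split class $V_j$ and every $s_i \in T$, $\pi_1(s_i) \in V_j$ forces $z_i = j$ and hence $\pi_2(s_i) \in V_j$, and symmetrically; this is exactly condition~2 of $\approx_{e}$. Conversely, if $\pi_1 \approx_{e} \pi_2$, then for each $s_i \in T$ either $\pi_1(s_i)$ lies in some split class $V_j$ --- in which case the $\approx_{e}$-condition gives $\pi_2(s_i) \in V_j$, so $z_i = j$ in both tuples --- or $\pi_1(s_i)$ lies in no split class, in which case $\pi_2(s_i)$ also lies in no split class, so $z_i = 0$ in both.

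The step I expect to require the most care is this last sub-case: one must rule out that a task assigned to a non-split class under $\pi_1$ is assigned to a split class $V_{j'}$ under $\pi_2$. This does not follow by reading off the definition directly; instead I would apply the symmetric form of the $\approx_{e}$-condition (using $\pi_2 \approx_{e} \pi_1$ to the class $V_{j'}$), which would place $\pi_1(s_i) \in V_{j'}$ and contradict that $\pi_1(s_i)$ lies in no split class. Everything else is a routine adaptation of Corollary~\ref{cor:uiencoding}.
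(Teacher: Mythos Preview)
Your proposal is correct and close in spirit to the paper's proof, but with a genuine and useful simplification. The paper encodes a plan $\pi$ by the triple $(\user(\pi),\task(\pi),\mathcal{T}^\pi)$, where $\mathcal{T}^\pi = \{\pi^{-1}(V_j \cap \user(\pi)) : V_j \cap \user(\pi) \neq \emptyset,\ V_j \setminus \user(\pi) \neq \emptyset\}$ is the \emph{set} of preimages of the split classes, and then orders patterns in $\pat[X,T]$ via the $\vect$ map applied to this set of disjoint subsets. Your encoding instead records, task by task, the actual index $j$ of the split class containing $\pi(s_i)$ (or $0$ if none), yielding a tuple in $\mathbb{N}_0^k$ that can be ordered lexicographically without any relabelling step.

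What this buys you: by keeping the class label $j$, your encoding directly matches condition~2 of $\approx_e$, which speaks of membership in each specific split class $V_j$; the verification of condition~(1) is then a straightforward unwinding, and your careful handling of the ``non-split to split'' sub-case via the symmetry of $\approx_e$ is exactly right. The paper's encoding, by recording only the unlabelled collection $\mathcal{T}^\pi$, is terser but in principle forgets which preimage belongs to which $V_j$; this is harmless in the paper's intended application (Lemma~\ref{lem:equivwidth} arranges the user ordering so that at most one class is split at a time), but your version is cleaner and works uniformly for arbitrary $X$ without relying on that later choice of ordering.
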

\begin{proof}
Suppose $\sim$ is an equivalence relation on users, and let $V_1, \dots, V_p$ be the equivalence classes of $\sim$ over $U$.
Suppose all constraints are of the form $(s_i,s_j, \sim)$ or $(s_i,s_j, \not\sim)$.

 For a plan $\pi$, define $\enc(\pi)$ to be
$(\user(\pi), \task(\pi), {\cal T}^{\pi}),$ where $$ {\cal T}^{\pi}=\set{\pi^{-1}(V_j\cap \user(\pi)) :\ V_j \cap \user(\pi) \neq \emptyset, V_j \setminus \user(\pi) \neq \emptyset}.$$

It is clear that $\enc(\pi_1) = \enc(\pi_2)$ if and only if $\pi_1 \approx_{e} \pi_2$, as $\pi_i(s) \in V_j$ if and only if $s \in \pi_i^{-1}(V_j)$, for $i \in \{1,2\}$. Furthermore it is clear that $\enc(\pi)$ can be determined in polynomial time for any $\pi$. 

It remains to define a linear ordering on $\pat[X,T]$ for a given $X \subseteq U, T \subseteq S$.
Let $\pi:\ T\rightarrow X$ be a plan.
As  ${\cal T}^{\pi}$ is a set of disjoint subsets
of 
$\task(\pi)$, and $T$ has a natural order,  we can order patterns in $\pat[X,T]$ according to the lexicographic order of ${\cal T}^{\pi}$.

\end{proof}

\begin{example}
 Let $\enc$ be the encoding given in the proof of Corollary \ref{cor:nestedEquivEncodingencoding}.
 Let $\pi_1,\pi_2$ be the plans given in Example \ref{ex:equivExample}.
 Then $\enc(\pi_1) = \enc(\pi_2) =\{\{u_1,u_2,u_3, u_4\}, \{s_1,s_2,s_3\}, \{\{s_2,s_3\}\}\}$.
\end{example}

\subsection{The Generic Algorithm}

We use the notion of \emph{\width{} } introduced in the next
definition to analyse the running time of our generic algorithm.

\begin{definition}\label{def:width}
 Let $(S,U,A,C)$ be an instance of {\sc WSP}, with $n=|U|, k=|S|$ and $m=|C|$, and suppose $\approx$ is a plan-\indis{} relation with respect to $C$.
 Given an ordering $u_1, \dots, u_n$ of $U$, let $U_i = \{u_1, \dots, u_i\}$ for each $i \in [n]$. 
 Let $w_i$ be the number of equivalence classes of $\approx$ over  the set $\Pi[U_i, T]$ of eligible plans.
 Then we define the \emph{\width{}  of $\approx$ with respect to $u_1, \dots, u_n$} to be $w = \max_{i \in [n]} w_i$.
\end{definition}

Since our generic algorithm 
only stores one plan from each equivalence class under $\approx$, 
we need the notion of a \repset{}.
 
\begin{definition}
Given an instance $(S,U,A,C)$ of {\sc WSP},
  let $\Pi'$ be a set of eligible plans and let $\approx$ be a plan-\indis{} relation. A set $\Pi''$ is said to be a \emph{$\Pi'$-\repset{} with respect to $\approx$} if the following properties hold:
  \begin{enumerate}
   \item $\Pi'' \subseteq \Pi'$; every plan in $\Pi''$ is valid;
   \item for every valid $\pi' \in \Pi'$, there exists a $\pi'' \in \Pi''$ such that $\pi' \approx \pi''$.
  \end{enumerate}

 \end{definition}

When $\approx$ is clear from context, we will say $\Pi''$ is a \emph{$\Pi'$-\repset{}} or a \emph{\repset{} for $\Pi'$}.
Our generic algorithm is based on finding plan-\indis{} relations for which there exist small \repset s.

\begin{theorem}\label{thm:linear}
 Let $(S,U,A,C)$ be an instance of {\sc WSP}, with $n=|U|, k=|S|$ and $m=|C|$.
 Let $u_1, \dots, u_n$ be an ordering of $U$, and let $U_i = \{u_1, \dots, u_i\}$ for each $i \in [n]$, and $U_0 = \emptyset$.
Suppose $\approx$ has \width{} $w$ with respect to $u_1, \dots, u_n$.
 Furthermore suppose that there exists a $\approx$-encoding $\enc$.
Then $(S,U,A,C)$ can be solved in time 
$O^*(3^kw\log w)$.
\end{theorem}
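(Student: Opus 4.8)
The plan is to upgrade the naive procedure of Algorithm~\ref{alg:naive} so that, instead of storing \emph{all} valid plans with a given user set, it stores only a $\Pi[U_i,T]$-\repset{} for each task set $T$, organised through the encoding $\enc$. Concretely, for each $i$ from $0$ to $n$ (starting from $U_0=\emptyset$ and the empty plan) and each $T\subseteq S$ I would maintain a set $R_i^T$ whose elements are valid plans in $\Pi[U_i,T]$, one per $\approx$-equivalence class; using $\enc$ I would store the patterns of $R_i^T$ at the nodes of a balanced binary tree ordered by the linear order $\preceq$ on $\pat[U_i,T]$, so that, by Proposition~\ref{ass:settimes}, membership tests and insertions cost $O^*(\log w)$. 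The sets $R_{i+1}^{\cdot}$ are built from $R_i^{\cdot}$ exactly as in Algorithm~\ref{alg:naive}: for every $\pi''\in R_i^{T'}$ and every $T\subseteq S\setminus T'$ I form $\pi''\cup(T\rightarrow u_{i+1})$, keep it if it is authorized and eligible, and insert its pattern into the tree for task set $T'\cup T$ only if that pattern is not already present. At the end I return any plan in $R_n^{S}$, or {\sc null} if $R_n^S=\emptyset$.

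The core of the argument is a correctness invariant, proved by induction on $i$: \emph{$R_i^T$ is a $\Pi[U_i,T]$-\repset{} with respect to $\approx$}. The base case ($i=0$) is immediate, since the only plan with empty user set is the empty plan. For the inductive step, take any valid $\pi\in\Pi[U_{i+1},T^\ast]$, let $T=\set{s\in T^\ast:\pi(s)=u_{i+1}}$ and let $\pi_0$ be the restriction of $\pi$ to $T^\ast\setminus T$ with $\user(\pi_0)=U_i$. First I would check that $\pi_0$ is valid: it is authorized because $\pi$ is, and it is eligible because any constraint whose scope lies inside $\task(\pi_0)$ is evaluated identically by $\pi_0$ and by the eligible plan $\pi$, while any other constraint is satisfied vacuously. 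By the induction hypothesis there is $\pi''\in R_i^{T^\ast\setminus T}$ with $\pi_0\approx\pi''$. Since $(T\rightarrow u_{i+1})$ is disjoint from both $\pi_0$ and $\pi''$ and $\pi_0\cup(T\rightarrow u_{i+1})=\pi$ is eligible, the defining property of a plan-\indis{} relation (Definition~\ref{def:similarity}) yields that $\pi''\cup(T\rightarrow u_{i+1})$ is eligible and that $\pi\approx\pi''\cup(T\rightarrow u_{i+1})$.

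The one place where eligibility alone is not enough, and the step I expect to be the main obstacle, is \emph{authorization}: $\approx$ governs only eligibility, so a priori $\pi''\cup(T\rightarrow u_{i+1})$ need not be authorized even though $\pi$ is. The key observation that rescues the proxy argument is that we extend by a \emph{single} new user: $\pi''\cup(T\rightarrow u_{i+1})$ is authorized if and only if $\pi''$ is authorized and $u_{i+1}\in A(s)$ for all $s\in T$, and exactly the same condition certifies that $\pi=\pi_0\cup(T\rightarrow u_{i+1})$ is authorized; since $\pi''$ is valid, the extension is authorized, hence valid. Consequently, when the algorithm processes the pair $(\pi'',T)$ it produces a valid plan equivalent to $\pi$, so after deduplication $R_{i+1}^{T^\ast}$ contains a valid plan in $\pi$'s class. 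This establishes the invariant, and applying it with $i=n$, $T=S$ shows that $R_n^S$ is nonempty precisely when a valid complete plan exists and that every element of $R_n^S$ is such a plan.

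For the running time I would bound the work per iteration. Passing from $R_i^{\cdot}$ to $R_{i+1}^{\cdot}$ processes $\sum_{T'\subseteq S}|R_i^{T'}|\,2^{k-|T'|}$ pairs; since each $R_i^{T'}$ has at most $w$ elements (by the definition of \width, Definition~\ref{def:width}), this is at most $w\sum_{j=0}^{k}\binom{k}{j}2^{k-j}=w\,3^k$, mirroring the $(i+2)^k$ count in the proof of Proposition~\ref{prop:naive}. Each pair costs polynomial time to form, to test for authorization and eligibility, and to encode (Definition~\ref{def:pat}), followed by one balanced-tree operation costing $O^*(\log w)$ by Proposition~\ref{ass:settimes}. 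Summing over the $n$ iterations and suppressing the polynomial factors $n,k,m$ gives the claimed bound $O^*(3^k w\log w)$.
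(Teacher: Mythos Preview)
Your proposal is correct and follows essentially the same approach as the paper's own proof: both maintain representative sets $\Pi[U_i,T]^*$ indexed by pattern trees, establish correctness by induction on $i$ via the same decomposition $\pi=\pi'\cup(T''\to u_{i+1})$ and the plan-\indis{} property, and obtain the $3^k w\log w$ bound by counting $\sum_{T'\subseteq T}$ pairs against $|R_i^{T'}|\le w$ and one $O^*(\log w)$ tree operation per pair. Your treatment is in fact slightly more careful than the paper's in two places it glosses over: you explicitly verify that the restriction $\pi_0$ is eligible (needed to invoke the inductive hypothesis, since a \repset{} only represents \emph{valid} plans), and you isolate the authorization step, observing that $\approx$ controls only eligibility while authorization of the proxy extension follows simply because $\pi''$ is already authorized and the new block $(T\to u_{i+1})$ is authorized whenever $\pi$ is.
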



\begin{algorithm}[!t]
 \SetKwFunction{Alg}{Win-Win}
\SetKwInOut{Input}{input}\SetKwInOut{Output}{output}

\Input{An instance $(S,U,A,C)$ of {\sc WSP}, an ordering $u_1, \dots, u_n$ of $U$, a plan-\indis{} relation $\approx$}

Set $\Pi[U_0, \emptyset]^* = \{(\emptyset \rightarrow \emptyset)\}$\;
\ForEach{$\emptyset \neq T \subseteq S$}
{
  Set $\Pi[U_0, T]^* = \emptyset$\;
}

Set $i = 0$\;
\While{$i < n$}
{
  \ForEach{$T \subseteq S$}
  {
     Set $\Pi[U_{i+1}, T]^* = \emptyset$\;  
     Set $\pat[U_{i+1},T]^* = \emptyset$\;
     \ForEach{$T' \subseteq T$}
     {
      Set $T'' = T \setminus T'$\;
      \If{$u_{i+1}\in A(s)$ for all $s \in T''$}
      {  
	\ForEach{$\pi' \in \Pi[U_i, T']^*$}
	{
	  Set $\pi = \pi' \cup (T'' \rightarrow u_{i+1})$\;
	  \If{$\pi$ is eligible}
	  {
	    Set $p = \enc(\pi)$\;
	    \If{$p \notin \pat[U_{i+1},T]^*$}
	    {
	      Insert $p$ into $\pat[U_{i+1},T]^*$\;
	      Set $\Pi[U_{i+1},T]^*= \Pi[U_{i+1},T]^* \cup \{\pi\}$\;
	    }
	  }
	}
      }
    }  
  }
  Set $i = i+1$\;
}
\eIf{$\Pi[U_n,S]^* \neq \emptyset$}
{
  \Return $\pi \in \Pi[U_n,S]^*$\;
}
{
  \Return {\sc null}\;
}
\caption{Generic algorithm for WSP}\label{alg:generic1}
\end{algorithm}


\begin{proof}

The proof proceeds by proving correctness and
bounding the running time of Algorithm \ref{alg:generic1}, which solves WSP.  
To begin the proof, we give an overview of Algorithm \ref{alg:generic1}.

\begin{itemize}
 \item  For each $i \in [n]$ in turn and each $T \subseteq S$, we will construct a \repset{} for $\Pi[U_i,T]$, denoted by $\Pi[U_i,T]^*$.
 \item As well as constructing the set $\Pi[U_i,T]^*$, we also maintain a companion set $\pat[U_i,T]^*$ = $\enc(\Pi[U_i,T]^*)$. 
 This provides an efficient way of representing the equivalence classes of $\Pi[U_i,T]^*$. In particular, it
allows us to check whether a given valid plan $\pi$ should be added to $\Pi[U_i,T]^*$, 
faster than by searching $\Pi[U_i,T]^*$ linearly. 
 \item After $\Pi[U_n,S]^*$ has been constructed, it remains to check whether $\Pi[U_n,S]^*$ is non-empty, as if there exists any valid complete plan $\pi$, there exists a valid complete plan $\pi' \in  \Pi[U_n,S]^*$ with $\pi \approx \pi'$.
\end{itemize}

Algorithm \ref{alg:generic1} gives the details on how to construct  $\Pi[U_i,T]^*$ for each $i$ and $T$.

The proof of correctness of Algorithm \ref{alg:generic1} proceeds by induction. 
Observe
first that for the case of $\Pi[U_0, T]^*$,  if $T \neq \emptyset$ then there is no possible plan in $\Pi[U_0, T]$, 
and so  we set $\Pi[U_0, T]^* = \emptyset$.
If $T = \emptyset$ then the only possible plan is the empty plan $\emptyset \rightarrow \emptyset)$. This plan is added to $\Pi[U_0,\emptyset]^*$, as it is trivially valid.
Thus $\Pi[U_0,T]^*$ is a $\Pi[U_0,T]$-\repset{} for each $T$.

So now assume that for all $T \subseteq S$ the set $\Pi[U_i, T]^*$ has been constructed and is a $\Pi[U_i, T]$-\repset{}. 
Now consider the construction of $\Pi[U_{i+1}, T]^*$ for some $T \subseteq S$.
It is clear that for any $\pi$ added to $\Pi[U_{i+1}, T]^*$, $\pi \in \Pi[U_{i+1}, T]$, and $\pi$ is eligible. Furthermore $\pi$ is authorized, as it is the union of the authorized plans $\pi' \in \Pi[U_i, T']$ and $(T'' \rightarrow u_{i+1})$. Thus every plan in $\Pi[U_{i+1}, T]^*$ is a valid plan in $\Pi[U_{i+1}, T]$.
On the other hand, suppose $\pi$ is a valid plan in $\Pi[U_{i+1}, T]$. Then let $T'' = \pi^{-1}(\{u_{i+1}\})$ and $T'=T \setminus T''$, and let $\pi'=\pi|_{U_i}$, so that $\pi = \pi' \cup (T'' \rightarrow u_{i+1})$. By assumption, there exists ${\pi'}^* \in \Pi[U_i, T]^*$ such that ${\pi'}^* \approx \pi'$. Consider the plan $\pi^* = {\pi'}^* \cup (T'' \rightarrow u_{i+1})$. It is clear that $\pi^*$ will be considered during the algorithm. Furthermore, as ${\pi'}^* \approx \pi'$ and $\pi = \pi' \cup (T'' \rightarrow u_{i+1})$, we have that $\pi^* \approx \pi$. Therefore $\pi^*$ is eligible (as $\pi$ is eligible) and also authorized (as it is the union of two authorized plans). Therefore $\pi^*$ is valid and will be added to $\Pi[U_{i+1}, T]^*$ unless $\Pi[U_{i+1}, T]^*$ already contains another plan $\approx$-equivalent to $\pi$.
Thus, $\Pi[U_{i+1}, T]^*$ contains a plan $\approx$-equivalent to $\pi$, from which it follows that $\Pi[U_{i+1}, T]^*$ is a $\Pi[U_{i+1}, T]$-\repset{}, as required.

It remains to analyse the running time of the algorithm.
By Proposition \ref{ass:settimes}, testing whether a pattern $p$ is in $\pat[U_i,T]^*$ and inserting $p$ into $\pat[U_i,T]^*$ takes $O^*(\log(|\pat[U_i,T]^*|))$ time.
Since by Assumption \ref{ass:polycheckequiv} and our assumption on the time to check constraints and authorizations
it takes polynomial time to check eligibility, authorization and $\approx$-equivalence of plans,
the running time of the algorithm is $O^*(\sum_{i=0}^{n-1}\sum_{T \subseteq S} \sum_{T' \subseteq T} \sum_{\pi \in \Pi[U_i,T']^*} \log(|\Pi[U_{i+1},T]^*|))$. It is clear by construction that $\Pi[U_i,T]^*$ contains at most one plan for each $\approx$-equivalence class over $\Pi[U_i,T]$, and so by definition $|\Pi[U_i,T]^*|\le w$ for all $i,T$. It follows that the running time of the algorithm is  $O^*(\sum_{i=0}^{n-1}\sum_{T \subseteq S} \sum_{T' \subseteq T} w\log w) = O^*(3^k  w\log w)$.
\end{proof}

\add{
\begin{remark}
 Rather than checking whether $\Pi[U_n,S]^*$ is non-empty at the end of the algorithm, we could instead check whether $\Pi[U_i,S]^*$ is empty after the construction of $\Pi[U_i,S]^*$ for each $i$. That is, we can stop our search as soon as we have a valid plan with task set $S$. This is likely to lead to saving in the running time of an implementation of the algorithm. 
 As this paper is concerned with the worst-case running time, which would be unaffected by this change, we perform the check at the end of the algorithm in the interest of clarity.
\end{remark}
}

\subsection{Application to User-Independent Constraints and its Optimality}


In this subsection, we show that  {\sc WSP} with user-independent  constraints is FPT.
Let $B_k$ denote the $k$th Bell number, the number of partitions of a set with $k$ elements. It is well-known that $B_k=2^{k\log k(1-o(1))}$ \cite{BeTa10}. 

\begin{lemma}\label{lem:uiwidth}
Let $u_1, \dots, u_n$ be any ordering of $U$, and 
let $\approx_{ui}$ be the plan-\indis{} relation given in Lemma \ref{lem:uirelation}.
Then $\approx_{ui}$ has \width{} $B_k$ with respect to $u_1, \dots, u_n$.
\end{lemma}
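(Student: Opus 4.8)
The plan is to show that for every index $i \in [n]$ and every task set $T \subseteq S$, the eligible plans in $\Pi[U_i, T]$ fall into at most $B_k$ classes under $\approx_{ui}$, and that this bound is best possible. The key observation is that once $\user(\pi) = U_i$ and $\task(\pi) = T$ are fixed (Condition~1 of $\approx_{ui}$), the only remaining data distinguishing plans up to $\approx_{ui}$ is the partition of $T$ into ``same-user'' blocks recorded by Condition~2. Since by Definition~\ref{def:width} the \width{} is the maximum over $i$ and $T$ of these class counts, bounding each count by $B_k$ will suffice.

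First I would make this precise by associating with each $\pi \in \Pi[U_i, T]$ the partition $P(\pi)$ of $T$ in which two tasks $s,t$ lie in the same block precisely when $\pi(s) = \pi(t)$; equivalently, $P(\pi) = \set{\pi^{-1}(u) : u \in U_i,\ \pi^{-1}(u) \neq \emptyset}$. Because all plans in $\Pi[U_i, T]$ already agree on their \user{} and \task{} sets, Condition~2 of Lemma~\ref{lem:uirelation} states exactly that $\pi_1 \approx_{ui} \pi_2$ if and only if $P(\pi_1) = P(\pi_2)$. Hence $P$ is constant on $\approx_{ui}$-classes and separates distinct classes, so it induces a well-defined injection from the set of $\approx_{ui}$-equivalence classes over $\Pi[U_i, T]$ into the set of all partitions of the set $T$.

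It follows that the number of equivalence classes over $\Pi[U_i, T]$ is at most the number of partitions of the $|T|$-element set $T$, namely $B_{|T|}$. Since $|T| \leqslant |S| = k$ and the Bell numbers are nondecreasing, this is at most $B_k$. Taking the maximum over all $i$ and all $T$ in Definition~\ref{def:width} yields $w \leqslant B_k$, independently of the chosen ordering $u_1, \dots, u_n$. For tightness, showing $B_k$ is attained, I would take $T = S$ and any index $i$ with $i \geqslant k$ in an instance whose constraint set is empty (the empty set of constraints is trivially user-independent, so that $\Pi[U_i,S]$ consists of all plans $S \to U_i$). Then every partition $P$ of $S$ is realised by some plan in $\Pi[U_i, S]$: assign one of $|P|$ distinct users of $U_i$ to each block of $P$. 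By the injectivity established above these $B_k$ plans occupy $B_k$ distinct equivalence classes, so $w_i = B_k$ and hence $w = B_k$.

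The main obstacle is the bookkeeping in the injection step: one must verify carefully that $P$ both respects and separates the $\approx_{ui}$-classes, which is where the precise form of Condition~2 (quantifying over \emph{all} pairs $s,t \in \task(\pi_1)$, and combined with the equality of task sets from Condition~1) is essential. One must also take care that the blocks of $P(\pi)$ are exactly the nonempty fibres $\pi^{-1}(u)$, so that the image of $P$ genuinely lands among the set-partitions of $T$ and the Bell-number count $B_{|T|}$ applies; this accounts for the fact that users in $U_i$ receiving no task contribute nothing to the partition, while every task of $T$ is assigned.
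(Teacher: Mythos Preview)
Your proof is correct and follows essentially the same approach as the paper: associate to each plan the partition of its task set into same-user blocks, observe that $\approx_{ui}$-classes over $\Pi[U_i,T]$ correspond injectively to partitions of $T$, and bound by $B_{|T|} \leqslant B_k$. Your treatment is in fact more careful than the paper's (which glosses over the nonempty-fibre issue and asserts ``exactly $B_{|T|}$'' without justification), and your added tightness argument---exhibiting an instance where the bound is attained---is a welcome addition that the paper omits.
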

\begin{proof}
 For any plan $\pi$, the set $\{\pi^{-1}(u): u \in \user(\pi)\}$ is a partition of the tasks in $\task(\pi)$. 
 Furthermore, two plans that generate the same partition are equivalent under $\approx_{ui}$. Therefore the number of equivalence classes of $\approx_{ui}$ over $\Pi[U_i,T]$ is exactly the number of possible partitions of $T$, which is $B_{|T|}$. Thus, $B_k$ is the required \width.
\end{proof}

\begin{theorem}\label{thm:ui}
If all constraints are user-independent, then {\sc WSP} can be solved in time $O^*(2^{k\log k})$.
 \end{theorem}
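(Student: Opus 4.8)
The plan is to obtain the result as an immediate consequence of the generic running-time bound in Theorem \ref{thm:linear}, instantiated with the plan-\indis{} relation $\approx_{ui}$ tailored to user-independent constraints. Recall that Theorem \ref{thm:linear} solves any WSP instance in time $O^*(3^k w \log w)$ provided we supply (a)~a plan-\indis{} relation $\approx$ of \width{} $w$ with respect to some fixed ordering of $U$, and (b)~a $\approx$-encoding. All of the work therefore reduces to verifying that $\approx_{ui}$ meets these hypotheses, reading off its \width{}, and simplifying the resulting bound.

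First I would collect the three ingredients already proved for $\approx_{ui}$. By Lemma \ref{lem:uirelation}, whenever all constraints are user-independent, $\approx_{ui}$ is a plan-\indis{} relation, so the requirement on the type of relation in (a) is met. By Corollary \ref{cor:uiencoding}, there exists a $\approx_{ui}$-encoding, so (b) holds. Finally, by Lemma \ref{lem:uiwidth}, for \emph{any} ordering $u_1,\dots,u_n$ of $U$ the \width{} of $\approx_{ui}$ equals the Bell number $B_k$. Fixing an arbitrary such ordering and applying Theorem \ref{thm:linear} with $w = B_k$ then yields a running time of $O^*(3^k B_k \log B_k)$.

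It remains to simplify this expression to $O^*(2^{k\log k})$, which is the only non-mechanical step. Here I would use the estimate $B_k = 2^{k\log k(1-o(1))}$ recalled just before Lemma \ref{lem:uiwidth}. The factor $\log B_k$ is then $O(k\log k)$, hence polynomial in $k$, and so is absorbed by the $O^*$ notation, leaving $O^*(3^k B_k)$. The remaining subtlety is the $3^k$ factor (which arises from the double subset-sum $\sum_{T\subseteq S}\sum_{T'\subseteq T}1 = 3^k$ in the analysis of Theorem \ref{thm:linear}): writing $3^k = 2^{(\log_2 3)k}$, its exponent $(\log_2 3)k$ is $o(k\log k)$ and can therefore be folded into the $(1-o(1))$ in the exponent of $B_k$. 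Concretely, $3^k B_k = 2^{(\log_2 3)k + k\log k(1-o(1))} = 2^{k\log k(1-o(1))}$; and since $B_k \le k^k$ already gives $\log_2 B_k \le k\log_2 k$ with a slack of order $k\log\log k$ that dominates the additive term $(\log_2 3)k$, we obtain $3^k B_k \le 2^{k\log k}$ for all sufficiently large $k$. Hence the running time is $O^*(2^{k\log k})$, as claimed.

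I expect the main obstacle to be precisely this final asymptotic bookkeeping: one must be sure that the genuinely exponential factor $3^k$ is swallowed by the super-exponential term $B_k = k^{k(1-o(1))}$ rather than surviving as an extra $2^{\Theta(k)}$ multiplier outside $2^{k\log k}$. Everything preceding it is a direct citation of Lemmas \ref{lem:uirelation} and \ref{lem:uiwidth}, Corollary \ref{cor:uiencoding}, and Theorem \ref{thm:linear}.
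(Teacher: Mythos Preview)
Your proposal is correct and follows exactly the same route as the paper: invoke Lemma~\ref{lem:uirelation}, Lemma~\ref{lem:uiwidth}, and Corollary~\ref{cor:uiencoding} to feed $\approx_{ui}$ with $w=B_k$ into Theorem~\ref{thm:linear}, and then simplify $O^*(3^k B_k\log B_k)$ using $B_k=2^{k\log k(1-o(1))}$. Your extra paragraph carefully justifying why the $3^k$ factor is absorbed into the exponent is more explicit than the paper's one-line calculation, but otherwise the arguments are identical.
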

\begin{proof}
Let $u_1, \dots, u_n$ be any ordering of $U$, and 
let $\approx_{ui}$ be the plan-\indis{} relation given in Lemma \ref{lem:uirelation}.

By Lemma \ref{lem:uiwidth}, $\approx_{ui}$ has \width{} $B_k$ with respect to $u_1, \dots, u_n$.
 Furthermore, by Corollary \ref{cor:uiencoding}, there exists an encoding for $\approx_{ui}$.
 Therefore, we may apply Theorem  \ref{thm:linear} with $w=B_k$, to get an algorithm with running time 
 $O^*(3^k  B_k\log(B_k)) 
 = O^*(3^k 2^{k\log k(1-o(1))} \log(2^{k\log k(1-o(1))}))
 = O^*(2^{k \log k})$.
\end{proof}


The running time $O^*(2^{k\log k})$ obtained is optimal in the sense that no algorithm of running time $O^*(2^{o(k\log k)})$ exists, unless the ETH fails.
In the proof of the following theorem, we use a result of Lokshtanov et al. (Theorem 2.2, \cite{LoMaSa}).

\begin{theorem}\label{thm:LB}
There is no algorithm for {\sc WSP} with user-independent constraints of
running time $O^*(2^{o(k \log k)})$, unless the ETH fails.
\end{theorem}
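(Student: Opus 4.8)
The plan is to prove the conditional lower bound by a polynomial-time reduction from the problem supplied by Lokshtanov et al. (Theorem 2.2, \cite{LoMaSa}), which has no $O^*(2^{o(k\log k)})$ algorithm unless the ETH fails. Call this source problem $\Pi_0$, with parameter $k_0$; a solution to $\Pi_0$ amounts, in essence, to choosing for each of $k_0$ coordinates one of $k_0$ possible values (so the natural search space has size $k_0^{k_0}=2^{\Theta(k_0\log k_0)}$) subject to the instance's constraints. I would map an arbitrary instance of $\Pi_0$ to a {\sc WSP} instance $(S,U,A,C)$ in which every member of $C$ is user-independent, taking care that the number of tasks $k=|S|$ is \emph{linear} in $k_0$. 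The linear bound is what makes the argument work: if {\sc WSP} with user-independent constraints admitted an $O^*(2^{o(k\log k)})$ algorithm, then applying it to the reduced instance would solve $\Pi_0$ in time $O^*(2^{o(k_0\log k_0)})$, contradicting the ETH.

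For the construction I would let the users encode the $k_0$ candidate values (with enough auxiliary users added so that the user-independent constraints behave as intended), and introduce one task per coordinate of $\Pi_0$, so that a complete plan corresponds exactly to a choice of a value for each coordinate. The per-coordinate \emph{value restrictions} of $\Pi_0$ are encoded through the authorization lists $A$: recall from Subsection~\ref{subsec:constraints} that authorization lists, viewed as unary constraints, need \emph{not} be user-independent, so they are precisely the right device for expressing value-specific restrictions, and the users of the {\sc WSP} instance remain distinguishable through them. The \emph{relational} part of $\Pi_0$, which couples the choices at different coordinates, must be carried by the members of $C$, and here I would use only genuinely user-independent predicates. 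This is possible because a user-independent constraint on a scope $L$ is, once $|U|$ is large enough, equivalent to prescribing an allowed family of equality-patterns (set partitions) of $L$: for any $\theta\in\Theta$ and any permutation $\psi$ of $U$ one has $\ker(\psi\circ\theta)=\ker\theta$, so $\Theta$ is exactly the set of functions whose kernel lies in some chosen family of partitions. In particular the binding-of-duty ($=$) and separation-of-duty ($\neq$) constraints, and more generally any predicate that depends only on which tasks receive equal users, are at my disposal.

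I would then check that the reduction is faithful: every valid complete plan of $(S,U,A,C)$ projects to a solution of $\Pi_0$, and conversely every solution of $\Pi_0$ lifts to a valid complete plan, so that $(S,U,A,C)$ is a \yes{} if and only if the $\Pi_0$ instance is. Combining faithfulness with $k=O(k_0)$ yields the stated lower bound exactly as sketched above, and it meets the $O^*(2^{k\log k})$ upper bound of Theorem~\ref{thm:ui}, thereby establishing the optimality claimed in the statement.

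The main obstacle is the tension between expressiveness and the parameter bound. Encoding the relational constraints of $\Pi_0$ by user-independent constraints is delicate because such constraints cannot refer to specific users; they see only the equality-pattern of an assignment, so any value-specific coupling between coordinates must be routed through the authorization lists or through carefully designed global equality-pattern constraints. At the same time, the naive device of adding one gadget task per pairwise constraint of $\Pi_0$ is \emph{not} affordable: it would force $k=\Theta(k_0^2)$, and then an $O^*(2^{o(k\log k)})$ algorithm would only yield $O^*(2^{o(k_0^2\log k_0)})$ for $\Pi_0$, which is no contradiction. Hence the crux is to combine the particular formulation of $\Pi_0$ furnished by \cite{LoMaSa} with an encoding that captures all of its constraints using only $O(k_0)$ tasks and purely user-independent constraints; obtaining this simultaneous control over the constraint language and the task count is the heart of the proof.
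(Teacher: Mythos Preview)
Your framework is right---reduce from a Lokshtanov--Marx--Saurabh problem, keep $k$ linear in the source parameter, push all user-specific information into the authorization lists, and let $C$ contain only equality-pattern predicates---but the proposal stops exactly where the proof begins. You correctly diagnose the obstacle (user-independent constraints cannot name users, yet the source problem's relational constraints \emph{do} name values) and you correctly rule out per-constraint gadget tasks, but you never say how to overcome it. As written, the proposal is a plan with its central step left blank.

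The missing idea, which the paper supplies, is to add \emph{anchor tasks} rather than auxiliary users. Concretely, the source problem is $k\times k$ {\sc Independent Set}: rows are coordinates, columns are values, and edges of $G$ forbid certain pairs of (row,value) choices. Take users $u_1,\dots,u_k$ and coordinate tasks $s_1,\dots,s_k$ with full authorization. Now add $k$ further tasks $t_1,\dots,t_k$ with singleton authorization $A'(t_j)=\{u_j\}$; these are the anchors. Any authorized plan must set $\pi(t_j)=u_j$, so ``the user assigned to $t_j$'' is a user-independent proxy for the specific user $u_j$. Each edge $\{(i,j),(h,l)\}$ of $G$ then becomes the user-independent constraint on scope $\{s_i,s_h,t_j,t_l\}$ that forbids the equality pattern $\pi(s_i)=\pi(t_j)\wedge\pi(s_h)=\pi(t_l)$. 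No tasks are added per edge; the anchors are shared across all constraints, so $|S|=2k$. Your remark that authorization lists would encode ``per-coordinate value restrictions'' is slightly off: in $k\times k$ {\sc Independent Set} there are no unary restrictions at all, and the authorization lists are used solely to pin the anchors.
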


\begin{proof}

We give a reduction from the problem {\sc $k \times k$ Independent Set}:
Given an integer parameter $k$ and a graph $G$ with vertex set $V =
\{(i,j):\ i, j \in [k]\}$, decide whether
$G$ has an independent set $I$ such that $|I|=k$ and for each $r \in [k]$,
there exists $i$ such that $(r,i) \in I$.

Informally, {\sc $k \times k$ Independent Set} gives us a graph on a $k\times k$ grid
of vertices, and asks whether there is an independent set with one vertex
from each row.
Lokshtanov et al. \cite{LoMaSa} proved that there is no algorithm to solve
{\sc $k \times k$ Independent Set} in time $2^{o(k \log k)}$, unless the ETH
fails.

Consider an instance of {\sc $k \times k$ Independent Set} with graph $G$. We
will first produce an equivalent instance of {\sc WSP} in which the
constraints are not user-independent. We will then refine this instance to
one with user-independent constraints.

Let $U = \{u_1, \dots, u_k\}$ be a set of $k$ users and $S = \{s_1, \dots,
s_k\}$ a set of $k$ tasks. Let the authorization lists be $A(s_i) = U$ for all
$i \in [k]$.
For $i,j,h,l \in [k]$, let $c((i,j),(h,l))$ denote the constraint with
scope $\{s_i,s_h\}$, and which is satisfied by any plan $\pi$
unless $\pi(s_i)=u_j$ and $\pi(s_h)=u_l$. 
For every pair of vertices $(i,j),(h,l)$ which are adjacent in $G$, add
the constraint $c((i,j),(h, l))$ to $C$.


We now show that $(S,U,A,C)$ is a {\sc Yes}-instance of {\sc WSP} if and
only if $G$ has an independent set with one vertex from each row.
Suppose $(S,U,A,C)$ is a {\sc Yes}-instance of {\sc WSP} and let $\pi$ be
a valid complete plan. Then for each $i \in [k]$, let $f(i)$ be the unique
$j$ such that $\pi(s_i)=u_j$. Then $I = \{(i,f(i)): i \in [k]\}$ is a set
with one vertex from each row in $G$; furthermore, as $\pi$ satisfies
every constraint, no edge in $G$ contains more than one element of $I$,
and so $I$ is an independent set.

Conversely, suppose $G$ is a {\sc Yes}-instance of {\sc $k \times k$ Independent Set}.
For each $i \in [k]$, let $f(i)$ be an integer such that $(i, f(i)) \in
I$. Then observe that $\bigcup_{i=1}^{k} (\{s_i\} \rightarrow u_{f(i)})$
is a valid complete plan.

We now show how to reduce $(S,U,A,C)$ to an instance of {\sc WSP} in which
all constraints are user-independent. The main idea is to introduce some
new tasks representing the users, and in the constraints, replace the
mention of a particular user with the mention of the user that performs a
particular task.


Create $k$ new tasks $t_1, \dots, t_k$ and let $S' = S \cup \{t_1,, \dots,
t_k\}$. Let the authorization lists be $A'(s)=U$ for each $s\in S$ and $A'(t_i)=\{u_i\}$ for each $i\in [k].$
For each constraint  $c((i,j),(h,l))$ in $C$, let $d((i,j),(h,l))$ be the
constraint with scope $ \{s_i,s_h,t_j,t_l\}$, which is
satisfied by any plan $\pi$ unless $\pi(s_i)=\pi(t_j)$ and
$\pi(s_h)=\pi(t_l)$. Let initially $C'=C$. Now replace, in $C'$, 
every constraint $c((i,j),(h,l))$
with $d((i,j,),(h,l))$.


Since they are defined by equalities, and no users are mentioned, the constraints in $C'$ are user-independent.
We now show that $(S',U,A',C')$ is equivalent to $(S,U,A,C)$.
First, suppose that $\pi$ is a valid complete plan for $(S,U,A,C)$. Then
let $\pi': S' \rightarrow U$ be the plan such that $\pi'(s_i)=\pi(s_i)$
for all $i \in [k]$, and $\pi'(t_j) = u_j$ for all $j \in [k]$. It is easy
to check that if $\pi$ satisfies every constraint of $C$ then $\pi'$
satisfies every constraint of $C'$. Since $\pi'$ is an authorized and eligible plan, $\pi'$ is a valid complete plan for $(S',U,A',C').$

Conversely, suppose that $\pi'$ is a valid complete plan for
$(S',U,A',C')$.  Since $A'(t_i)=\{u_i\}$ for each $i\in [k],$ $\pi'(t_i)=u_i$ for every $i\in [k].$ 
For each $i \in [k]$, let $f(i)$ be the unique integer such that
$\pi'(s_i)=u_{f(i)}$. Then define $\pi:S \rightarrow U$ by $\pi(s_i)
= u_{f(i)}$, and observe that all constraints in $C$ are satisfied by
$\pi$. So, $\pi$ is a valid complete plan for $(S,U,A,C)$.
\end{proof}

\subsection{Application to Equivalence Relation Constraints}

It is known that restricting the WSP to have only equivalence relation constraints is enough to ensure FPT~\cite{CrGuYeJournal}.  However, we can derive this result by applying our algorithm directly having shown the appropriate properties of the language of equivalence relation constraints.  This serves to demonstrate the wide applicability of our approach.

\begin{lemma}\label{lem:equivwidth}
 Let $\approx_{e}$ be the plan-indistinguishability relation given for a set of equivalence relation constraints in Lemma \ref{lem:equivRelation}. Then there exists an ordering $u_1, \dots, u_n$ of $U$ such that $\approx_{e}$ has \width{} $2^k$ with respect to $U$.
\end{lemma}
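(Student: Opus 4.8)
The plan is to order the users so that the members of each $\sim$-class appear consecutively, and then show that every prefix of this ordering can split at most one equivalence class; this caps the number of $\approx_{e}$-classes over any $\Pi[U_i,T]$ at $2^{|T|}$.

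First I would fix the ordering $u_1, \dots, u_n$ obtained by listing all members of $V_1$, followed by all members of $V_2$, and so on up to $V_l$. With this ordering each prefix $U_i = \{u_1,\dots,u_i\}$ consists of the classes $V_1, \dots, V_{j-1}$ in full, together with an initial segment $P \subseteq V_j$ of the next class (where $P$ may be empty or all of $V_j$). Every class $V_{j'}$ with $j' < j$ then satisfies $V_{j'} \subseteq U_i$, every class with $j' > j$ satisfies $V_{j'} \cap U_i = \emptyset$, and $V_j$ meets both $U_i$ and its complement only when $\emptyset \neq P \neq V_j$. In all cases at most one class is split by $U_i$.

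Next I would count the $\approx_{e}$-classes over $\Pi[U_i,T]$. Since $\user(\pi)$ and $\task(\pi)$ are fixed throughout $\Pi[U_i,T]$, Condition~1 of Lemma~\ref{lem:equivRelation} is automatic, and by Condition~2 two eligible plans $\pi_1,\pi_2 \in \Pi[U_i,T]$ are $\approx_{e}$-equivalent exactly when $\pi_1^{-1}(V_{j'}) \cap T = \pi_2^{-1}(V_{j'}) \cap T$ for every class $V_{j'}$ split by $U_i$. Because $U_i$ splits at most the single class $V_j$, the equivalence class of a plan $\pi$ is determined by the one subset $\pi^{-1}(V_j) \cap T \subseteq T$ (and when no class is split, Condition~2 is vacuous, so all of $\Pi[U_i,T]$ forms a single class). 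As there are at most $2^{|T|}$ subsets of $T$, the set $\Pi[U_i,T]$ contains at most $2^{|T|} \le 2^{k}$ equivalence classes. Taking the maximum over $i \in [n]$ and $T \subseteq S$ shows the \width{} of $\approx_{e}$ with respect to this ordering is at most $2^k$, and the bound is attained in the worst case (for example, when $T = S$, the class $V_j$ is split, and the constraints permit every subset of $S$ to be realised as $\pi^{-1}(V_j)$ by an eligible plan).

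The one genuinely non-routine step is recognising that the grouping ordering is the right choice: once we know that each prefix splits at most one class, Condition~2 of $\approx_{e}$ collapses to recording a single subset of $T$, and the $2^k$ bound is immediate. I expect the only delicate point to be handling the boundary case in which $U_i$ ends exactly at a class boundary, so that $P$ is empty or equal to $V_j$ and no class is split at all; there $\Pi[U_i,T]$ collapses to a single $\approx_{e}$-class, which is still within the claimed bound.
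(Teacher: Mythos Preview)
Your proposal is correct and follows essentially the same approach as the paper: order the users so each $\sim$-class is contiguous, observe that any prefix $U_i$ splits at most one class $V_{j_i}$, and conclude that the $\approx_e$-class of a plan in $\Pi[U_i,T]$ is determined by the single subset $\pi^{-1}(V_{j_i})\cap T$, giving at most $2^{|T|}\le 2^k$ classes. Your write-up simply adds more explicit handling of the boundary case and a remark on tightness, neither of which the paper includes, but the argument is the same.
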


\begin{proof}
Suppose $\sim$ is an equivalence relation on users, and let $V_1, \dots, V_p$ be the equivalence classes of $\sim$ over $U$.
Suppose all constraints are of the form $(s_i,s_j, \sim)$ or $(s_i,s_j, \not\sim)$.

Let $u_1, \dots, u_n$ be an ordering of $U$ such that all the elements of $V_j$ appear before all the elements of $V_{j'}$, for any $j < j'$. Thus, for any $i$ and any plan $\pi$ with $\user(\pi)=U_i=\{u_1, \dots, u_i\}$, there is at most one integer $j_i$ such that 
$V_{j_i} \cap \user(\pi) \neq \emptyset, V_{j_i} \setminus \user(\pi) \neq \emptyset$.

It follows that any two plans $\pi_1, \pi_2 \in \Pi[U_i,T]$ are $\approx_{e}$-equivalent, for any $i \in [n], T \subseteq S$, 
provided that $\pi_1(t) \in V_{j_i}$ if and only if $\pi_2(t) \in V_{j_i}$ for any $t \in T$. 
Therefore $\approx_{e}$ has at most $2^k$ equivalence classes over $\Pi[U_i,T]$, as required.
\end{proof}

\begin{theorem}\label{thm:equiv}
 Suppose $\sim$ is an equivalence relation on $U$.
Suppose all constraints are of the form $(s_i,s_j, \sim)$ or $(s_i,s_j, \not\sim)$.
Then {\sc WSP} can be solved in time $O^*(6^k)$.
 \end{theorem}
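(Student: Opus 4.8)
The plan is to assemble the final statement directly from the machinery already built, exactly as was done for the user-independent case in Theorem \ref{thm:ui}. All three ingredients needed to invoke Theorem \ref{thm:linear} have been established separately for equivalence-relation constraints, so the proof should be a short application rather than a new argument.

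First I would fix an ordering $u_1, \dots, u_n$ of $U$ as guaranteed by Lemma \ref{lem:equivwidth}: one in which the elements of each $\sim$-class appear consecutively (all of $V_j$ before all of $V_{j'}$ whenever $j < j'$). With respect to this ordering, Lemma \ref{lem:equivwidth} tells us that the plan-\indis{} relation $\approx_{e}$ of Lemma \ref{lem:equivRelation} has \width{} $2^k$. Second, I would invoke Corollary \ref{cor:nestedEquivEncodingencoding} to assert that a $\approx_{e}$-encoding exists; note this encoding is defined for arbitrary $X \subseteq U$ and $T \subseteq S$, so it is fully compatible with the particular ordering chosen above, and there is no tension between the two requirements.

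Third, with these two facts in hand I would apply Theorem \ref{thm:linear} with $w = 2^k$. This yields a running time of $O^*(3^k \cdot 2^k \cdot \log(2^k))$. Since $\log(2^k) = k \log 2$ is polynomial in $k$, it is absorbed by the $O^*$ notation, and $3^k \cdot 2^k = 6^k$, giving the claimed bound $O^*(6^k)$.

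I do not anticipate a genuine obstacle here: every nontrivial step — verifying that $\approx_{e}$ is a plan-\indis{} relation, bounding its \width{}, and constructing the encoding — has already been discharged in the preceding lemmas and corollaries, so the only remaining content is the arithmetic simplification of the running-time expression. The one point requiring a moment's care is to confirm that the specially structured ordering demanded by Lemma \ref{lem:equivwidth} is the same ordering fed into Theorem \ref{thm:linear}; this is immediate because Theorem \ref{thm:linear} is stated for an arbitrary fixed ordering, and we are free to choose the consecutive-class ordering throughout.
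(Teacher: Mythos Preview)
Your proposal is correct and follows essentially the same approach as the paper's own proof: choose the ordering from Lemma~\ref{lem:equivwidth}, invoke Corollary~\ref{cor:nestedEquivEncodingencoding} for the encoding, apply Theorem~\ref{thm:linear} with $w=2^k$, and simplify $O^*(3^k \cdot 2^k \log(2^k))$ to $O^*(6^k)$. Your additional remark about the compatibility of the chosen ordering with the encoding is a reasonable sanity check, but it is not an issue the paper raises or needs to address.
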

\begin{proof}
Let $u_1, \dots, u_n$ be the ordering of $U$ given by Lemma \ref{lem:equivwidth}, and 
let $\approx_{e}$ be the plan-\indis{} relation given in Lemma \ref{lem:equivRelation}.

By Lemma \ref{lem:equivwidth}, $\approx_{e}$ has \width{} $2^k$ with respect to $u_1, \dots, u_n$.
 Furthermore by Corollary \ref{cor:nestedEquivEncodingencoding}, there exists an encoding for $\approx_{e}$.
 Therefore, we may apply Theorem  \ref{thm:linear} with $w=2^k$, to get an algorithm with running time 
 $O^*(3^k  2^k\log(2^k)) 
  = O^*(6^k)$.
\end{proof}

\section{Unions of Constraint Languages}\label{sec5}

In this section we show how our approach allows us easily to combine constraint languages shown to be FPT for the WSP.  We do not need to build bespoke algorithms for the new constraint language obtained, only to show that the two languages are in some sense compatible.

This highlights the advantages of our 
approach over previous methods, which required the development of new 
algorithms when different constraint languages were combined in an 
instance of WSP (see~\cite{CrGuYeJournal}, for example).


\begin{theorem}\label{thm:compatible}
 Let $(S,U,A,C_1 \cup C_2)$ be an instance of {\sc WSP}, and suppose $\approx_1$ is a plan-\indis{} relation with respect to $C_1$ and $\approx_2$ is a plan-\indis{} relation with respect to $C_2$.
 Given an ordering $u_1, \dots, u_n$ of $U$, let $W_1$ be the \width{} of $\approx_1$ with respect to $u_1, \dots, u_n$ and $W_2$ the \width{} of $\approx_2$ with respect to $u_1, \dots, u_n$.
 
 Let $\approx$ be the equivalence relation such that $\pi \approx \pi'$ if and only if $\pi \approx_1 \pi'$ and $\pi \approx_2 \pi'$. Then $\approx$ is a plan-\indis{} relation with respect to $C_1 \cup C_2$, and $\approx$ has \width{} $W_1W_2$ with respect to $u_1, \dots, u_n$.
 
\end{theorem}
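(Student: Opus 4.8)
The plan is to establish the two claims separately while keeping careful track throughout of the three distinct notions of eligibility involved: eligibility with respect to $C_1$, with respect to $C_2$, and with respect to $C_1 \cup C_2$. The one fact that drives everything is that a plan is eligible for $C_1 \cup C_2$ if and only if it is eligible for $C_1$ and eligible for $C_2$; in particular any $(C_1 \cup C_2)$-eligible plan lies in the domains of both $\approx_1$ and $\approx_2$, so that $\approx$ is a well-defined equivalence relation, namely the intersection $\approx_1 \cap \approx_2$ restricted to the $(C_1 \cup C_2)$-eligible plans. First I would show that $\approx$ refines extension-equivalence with respect to $C_1 \cup C_2$. Given $\pi_1 \approx \pi_2$, the requirement $\user(\pi_1) = \user(\pi_2)$ and $\task(\pi_1) = \task(\pi_2)$ of Definition \ref{def:ee} is inherited from $\pi_1 \approx_1 \pi_2$. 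For the second condition, fix a disjoint $\pi'$: since $\approx_1$ refines extension-equivalence for $C_1$ and $\approx_2$ for $C_2$, the plan $\pi_1 \cup \pi'$ is $C_1$-eligible exactly when $\pi_2 \cup \pi'$ is, and likewise for $C_2$; applying the factorisation of eligibility gives that $\pi_1 \cup \pi'$ is $(C_1 \cup C_2)$-eligible exactly when $\pi_2 \cup \pi'$ is.

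Next I would verify the closure condition of Definition \ref{def:similarity}. Suppose $\pi_1 \approx \pi_2$ and $\pi'$ is disjoint from both with $\pi_1 \cup \pi'$ eligible for $C_1 \cup C_2$. Then $\pi_1 \cup \pi'$ is in particular $C_1$-eligible, and $\pi_1, \pi_2$ are $C_1$-eligible, so the plan-\indis{} property of $\approx_1$ yields $\pi_1 \cup \pi' \approx_1 \pi_2 \cup \pi'$; the identical argument with $C_2$ gives $\pi_1 \cup \pi' \approx_2 \pi_2 \cup \pi'$. By the definition of $\approx$ these two facts combine to give $\pi_1 \cup \pi' \approx \pi_2 \cup \pi'$, the requisite $(C_1 \cup C_2)$-eligibility of both sides following from the refinement established above. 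This completes the proof that $\approx$ is a plan-\indis{} relation with respect to $C_1 \cup C_2$.

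For the \width{} bound I would fix $i$ and consider the set $P$ of $(C_1 \cup C_2)$-eligible plans over user set $U_i$ relevant to Definition \ref{def:width}. Since $\approx$ is the intersection $\approx_1 \cap \approx_2$, the $\approx$-class of any $\pi \in P$ equals $[\pi]_{\approx_1} \cap [\pi]_{\approx_2}$, so distinct $\approx$-classes map injectively to pairs consisting of an $\approx_1$-class and an $\approx_2$-class. Because $P$ is contained in the set of $C_1$-eligible plans over user set $U_i$, the number of $\approx_1$-classes meeting $P$ is at most the number over that larger set, hence at most $W_1$; symmetrically the number of $\approx_2$-classes meeting $P$ is at most $W_2$. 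Therefore $P$ meets at most $W_1 W_2$ classes of $\approx$, and taking the maximum over $i$ bounds the \width{} of $\approx$ by $W_1 W_2$.

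The main obstacle is not any single hard step but the eligibility bookkeeping: each invocation of the refinement or closure property of $\approx_1$ or $\approx_2$ requires that the plans concerned actually be eligible for the corresponding single constraint set, so that they lie in the domain on which $\approx_1$ (respectively $\approx_2$) is defined and the hypotheses of Definitions \ref{def:ee} and \ref{def:similarity} genuinely apply. Once the $C_1 \cup C_2$ eligibility is consistently factored into its $C_1$ and $C_2$ components, every remaining step is routine.
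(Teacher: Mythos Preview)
Your proposal is correct and follows essentially the same approach as the paper's own proof: both establish the extension-equivalence refinement by factoring $(C_1\cup C_2)$-eligibility into its $C_1$- and $C_2$-components, verify the closure condition of Definition~\ref{def:similarity} coordinatewise, and bound the \width{} by observing that an $\approx$-class is determined by a pair consisting of an $\approx_1$-class and an $\approx_2$-class. Your version is, if anything, slightly more explicit about the eligibility bookkeeping (in particular the observation that $(C_1\cup C_2)$-eligible plans lie in the domains of both $\approx_1$ and $\approx_2$, and that the set $P$ sits inside the corresponding $C_j$-eligible sets when counting classes), but the underlying argument is the same.
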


\begin{proof}
 
 We first show that $\approx$ is a plan-\indis{} relation with respect to $C_1 \cup C_2$. Let $\pi$ and $\pi'$ be eligible plans (with respect to $C_1 \cup C_2$).
 As $\pi \approx \pi'$ implies $\pi \approx_1 \pi'$ and $\approx_1$ satisfies the conditions of a plan-\indis{} relation, it is clear that if $\pi \approx \pi'$ then $\user(\pi)=\user(\pi')$ and $\task(\pi)=\task(\pi')$. Now consider a plan $\pi''$ disjoint from $\pi$ and $\pi'$. As $\approx_1$ is a plan-\indis{} relation with respect to $C_1$ and $\pi \approx_1 \pi'$, we have that $\pi \cup \pi''$ is $C_1$-eligible if and only if $\pi' \cup \pi''$ is.
 Similarly $\pi \cup \pi''$ is $C_2$-eligible if and only if $\pi' \cup \pi''$ is. Observing that a plan is $C_1 \cup C_2$-eligible if and only if it is $C_1$-eligible and $C_2$-eligible, this implies that $\pi \cup \pi''$ is $C_1 \cup C_2$-eligible if and only if $\pi' \cup \pi''$ is. 
 Thus we have that $\pi$ and $\pi'$ are extension equivalent. 
 
 As $\approx_1$ and $\approx_2$ are plan-\indis{} relations, we have that $\pi \cup \pi'' \approx_1 \pi' \cup \pi''$ and $\pi \cup \pi'' \approx_2 \pi' \cup \pi''$, and therefore $\pi \cup \pi'' \approx \pi' \cup \pi''$.
 Thus, $\approx$ satisfies all the conditions of a plan-\indis{} relation.

 To bound the \width{} of $\approx$ with respect to $u_1, \dots, u_n$, consider any $T \subseteq S$ and $U_i  = \{u_1, \dots, u_i\}$. It is enough to note that any $\approx$-equivalent plans in $\Pi[U_i,T]$ must be in the same $\approx_1$ and $\approx_2$-equivalence classes. As there are at most $W_1$ choices for the $\approx_1$-equivalence class and at most $W_2$ choices for the $\approx_2$ equivalence class, $\approx$ has at most $W_1W_2$ equivalence classes over $\Pi[U_i,T]$.
\end{proof}

\begin{remark}\label{rem:composeEncodings}
Given an encoding $\enc_1$ for $\approx_1$ and an encoding $\enc_2$ for $\approx_2$, we may construct an encoding for $\approx$.
Given a plan $\pi$, let $\enc(\pi)$ be the ordered pair $(\enc_1(\pi), \enc_2(\pi))$.
 It is clear that $\enc(\pi) = \enc(\pi')$ if and only if $\pi \approx \pi'$.
 
 Given sets $T \subseteq S$ and $U_i  = \{u_1, \dots, u_i\}$, fix linear orderings of $\enc_1(\Pi[U_i,T])$ and $\enc_2(\Pi[U_i,T])$. Then let $\preceq$ be the lexicographic ordering of $\enc(\Pi[U_i,T]) = \enc_1(\Pi[U_i,T]) \times \enc_2(\Pi[U_i,T]).$
 \end{remark}

\add{There is nothing to stop us applying Theorem \ref{thm:compatible} multiple times, in order to get a plan-\indis{} relation with bounded \width{} for a union of several constraint languages. Note that the \width{} can be expected to grow exponentially with the number of languages in the union. Thus, it makes sense to only apply Theorem \ref{thm:compatible} to the union of a small number of languages. However, as long as there is a fixed number of languages, and  each has a plan-\indis{} relation with fixed-parameter \width{}, the resulting union of languages will also have a plan-\indis{} relation with fixed-parameter \width{}.}

We  can now use this result directly to show that if all our constraints are either user independent or equivalence relation constraints, then the WSP is still FPT.

\begin{theorem}
 Suppose $\sim$ is an equivalence relation on $U$.
 Let $(S,U,A,C)$ be an instance of {\sc WSP}, and suppose that all  constraints are  either of the form, $(s_1, s_2, \sim)$, $(s_1, s_2, \not\sim)$ or user-independent constraints.
Then {\sc WSP} can be solved in time $ O^*(2^{k \log k + k})$.
\end{theorem}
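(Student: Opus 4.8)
The plan is to split $C$ into the set $C_1$ of user-independent constraints and the set $C_2$ of equivalence-relation constraints (those of the form $(s_1,s_2,\sim)$ or $(s_1,s_2,\not\sim)$), so that $C = C_1 \cup C_2$, and then to invoke the compatibility result of Theorem~\ref{thm:compatible}. By Lemma~\ref{lem:uirelation}, $\approx_{ui}$ is a plan-\indis{} relation with respect to $C_1$, and by Corollary~\ref{cor:uiencoding} it admits an encoding; by Lemma~\ref{lem:equivRelation}, $\approx_{e}$ is a plan-\indis{} relation with respect to $C_2$, with an encoding supplied by Corollary~\ref{cor:nestedEquivEncodingencoding}. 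Since $C_1$ consists only of user-independent constraints and $C_2$ only of equivalence-relation constraints, the hypotheses of these lemmas are satisfied for each part in isolation, even though neither holds for $C$ as a whole.

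The one point requiring care before applying Theorem~\ref{thm:compatible} is the choice of user ordering, because the two \width{} bounds are stated relative to an ordering. Lemma~\ref{lem:uiwidth} gives \width{} $B_k$ for $\approx_{ui}$ under \emph{any} ordering, whereas Lemma~\ref{lem:equivwidth} only guarantees \width{} $2^k$ for $\approx_{e}$ under the specific ordering in which the users of each $\sim$-class appear consecutively. I would therefore fix that ordering $u_1,\dots,u_n$: under it $\approx_{ui}$ still has \width{} $B_k$ (being ordering-independent), and $\approx_{e}$ has \width{} $2^k$, so both widths are computed with respect to the same ordering. Applying Theorem~\ref{thm:compatible} then yields a plan-\indis{} relation $\approx$ with respect to $C = C_1 \cup C_2$ of \width{} $W_1 W_2 = B_k \cdot 2^k$, and Remark~\ref{rem:composeEncodings} provides an encoding for $\approx$ built from the two component encodings.

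Finally I would feed $\approx$ into the generic algorithm of Theorem~\ref{thm:linear} with $w = B_k \cdot 2^k$, obtaining running time $O^*(3^k\, w \log w)$. The closing step is the arithmetic simplification. The factor $\log w = O(k\log k)$ is polynomial in $k$ and absorbed by $O^*$, and $3^k \cdot B_k \cdot 2^k = 6^k B_k$. Substituting $B_k = 2^{k\log k(1-o(1))}$ gives $6^k B_k = 2^{\,k\log k\,(1-o(1)) + k\log_2 6}$, and I expect this estimate to be the main obstacle: one must check that the $(1-o(1))$ saving hidden in the Bell-number bound genuinely absorbs the extra $(\log_2 6 - 1)k \approx 1.585k$ that remains after the target's $2^k$ factor is accounted for, i.e.\ that $6^k B_k = O^*(2^{k\log k + k})$. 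This holds because the deficit $k\log k - \log_2 B_k$ grows like $k\log\log k$, which dominates the linear term $1.585k$ for all large $k$, so the stated running time follows.
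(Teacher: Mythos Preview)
Your proposal is correct and follows essentially the same route as the paper: split $C$ into the user-independent part and the equivalence-relation part, fix the ordering from Lemma~\ref{lem:equivwidth} so that both diversity bounds apply simultaneously, combine via Theorem~\ref{thm:compatible} and Remark~\ref{rem:composeEncodings}, and then invoke Theorem~\ref{thm:linear} with $w=B_k\cdot 2^k$. The paper's proof is identical in structure; it simply writes the final simplification $O^*(3^k\cdot 2^{k\log k(1-o(1))+k}\log(\cdots))=O^*(2^{k\log k+k})$ in one line, whereas you spell out why the $o(1)$ slack in the Bell-number exponent (of order $k\log\log k$) absorbs the extra $3^k$ factor --- a point the paper leaves implicit here just as it does in the proof of Theorem~\ref{thm:ui}.
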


%

\begin{proof}
 Let $C_{e} \subseteq C$ be the set of constraints of the form $(s_1, s_2, \sim)$, $(s_1, s_2, \not\sim)$, and let $C_{ui}$ be the remaining (user-independent) constraints.
 
Let $u_1, \dots, u_n$ be the ordering of $U$ given by Lemma \ref{lem:equivwidth}.
By Lemmas \ref{lem:equivRelation} and \ref{lem:equivwidth}, there exists a plan-\indis{} relation $\approx_{e}$ for $C_{e}$ that has \width{} $2^k$ with respect to $u_1, \dots, u_n$.
Furthermore by Corollary \ref{cor:nestedEquivEncodingencoding}, $\approx_{e}$ has an encoding.
By Lemmas \ref{lem:uirelation} and \ref{lem:uiwidth}, there exists a plan-\indis{} relation $\approx_{ui}$ for $C_{ui}$ that has \width{} $B_k$ with respect to $u_1, \dots, u_n$.
Furthermore by Corollary \ref{cor:uiencoding}, $\approx_{ui}$ has an encoding.

Therefore by Theorem \ref{thm:compatible}, we may find a plan-\indis{} relation $\approx$ for $C$, such that $\approx$ has \width{} $B_k \cdot 2^k$ with respect to $u_1, \dots, u_n$ and $\approx$ has an encoding. Thus we may apply Theorem \ref{thm:linear} with $w = B_k \cdot 2^k$, 
to get a running time of 
 $O^*(3^k  B_k2^k\log(B_k2^k)) 
 = O^*(3^k 2^{k\log k(1-o(1)) + k} \log(2^{k\log k(1-o(1)) + k}))
 = O^*(2^{k \log k + k})$.
\end{proof}

\section{Conclusion} \label{sec6}
In this paper we introduced
\remove{the first generic algorithm for WSP} 
\add{an algorithm applicable to a wide range of WSP problems, based on the notion of plan-\indis,}
and showed that our \remove{generic} algorithm is powerful enough
to be optimal, in a sense, for the wide class of user-independent
constraints. The generic algorithm is also a fixed-parameter algorithm for
equivalence relation constraints, which are not user-independent. We show how to deal
with unions of different types of constraints using our generic algorithm.
In particular, we prove that the generic algorithm is a fixed-parameter
algorithm for the union of user-independent and equivalence relation constraints.

Due to the difficulty of acquiring real-world workflow instances, Wang and
Li \cite{WangLi10} used synthetic data in their experimental study. Wang and Li
encoded WSP as a pseudo-Boolean SAT problem in order to use
a pseudo-Boolean SAT solver SAT4J to solve several instances of WSP. We
have implemented our algorithm and compared its performance to SAT4J on
another set of synthetic instances of WSP in \cite{CCGGJ14}.
 These instances use $k = 16$, $20$ and $24$, $n = 10k$ and
user-independent constraints of three different types: we vary the number
of constraints and the proportions of different constraints types; each
user is authorized for between $1$ and $8$ tasks for $k=16$, between 1 and
10 tasks for $k=20$, and between 1 and 12 tasks for $k=24$.  The algorithm
was implemented in C++ and has been enhanced by the inclusion of techniques
employed in CSP solving, such as propagation.  We also converted WSP
instances into pseudo-Boolean problems for processing by SAT4J. All
experiments were performed on a MacBook Pro computer having a 2.6 GHz Intel
Core i5 processor and 8 GB 1600 MHz DDR3 RAM (running Mac OS X 10.9.2).

For lightly-constrained instances, SAT4J was often faster than our
algorithm, largely because the number of patterns considered by our
algorithm is large for such instances.  However, for highly-constrained
instances, SAT4J was unable to compute a decision for a number of instances
(because it was running out of memory), in sharp contrast to our algorithm
which solved all instances.  Overall, on average, our algorithm was faster
than SAT4J and, in particular, was two orders of magnitude faster for $k =
16$.  Moreover, the time taken by our algorithm varies much less than that of
SAT4J, even for unsatisfiable instances, because the time taken is
proportional to the product of the number of patterns and the number of
users.  (In particular, it is much less dependent on the number of
constraints, 
a parameter that
can cause
significant fluctuations in the time taken by SAT4J because this leads to a
sharp increase in the number of variables in the pseudo-Boolean encoding.)
 Full details of our results may be found in~\cite{CCGGJ14}.

\paragraph{Acknowledgement.} Our research was supported by EPSRC grant
EP/K005162/1

\newpage


\begin{thebibliography}{10}
 \bibitem{ansi-rbac04}
American National Standards Institute.
\newblock {\em {ANSI INCITS} 359-2004 for Role Based Access Control}, 2004.

\bibitem{BaKo}
L. Barto and M. Kozik.
\newblock Constraint satisfaction problems of bounded width.
\newblock In {\em FOCS}, pages 595--603, 2009.

\bibitem{BaBuKa}
D.A.~Basin, S.J.~Burri, and G.~Karjoth.
\newblock Obstruction-Free Authorization Enforcement: Aligning Security with Business Objectives.
\newblock In {\em CSF 2011}, pages 99--113, 2011.

\bibitem{catalogue}
N.~Beldiceanu, M.~Carlsson, and J.X. Rampon.
\newblock Global constraints catalogue, Sept 2012.

\bibitem{BeTa10}
D.~Berend and T.~Tassa.
\newblock Improved bounds on {B}ell numbers and on moments of sums of random
  variables.
\newblock {\em Probability and Math. Stat.}, 30:185--205, 2010.

\bibitem{BeFeAt99}
E.~Bertino, E.~Ferrari, and V.~Atluri.
\newblock The specification and enforcement of authorization constraints in
  workflow management systems.
\newblock {\em ACM Trans. Inf. Syst. Secur.}, 2(1):65--104, 1999.

\bibitem{BeboFe01}
E. Bertino, P.A. Bonatti, and E. Ferrari.
\newblock {TRBAC}: A temporal role-based access control model.
\newblock {\em ACM Trans. Inf. Syst. Secur.}, 4(3):191--233, 2001.


\bibitem{BoCyKrNe13}
H.L. Bodlaender, M. Cygan, S. Kratsch, and J. Nederlof.
\newblock Solving weighted and counting variants of connectivity problems
  parameterised by treewidth deterministically in single exponential time.
\newblock In {\em ICALP}, 2013, to appear.

\bibitem{BuDa}
A.~Bulatov and V.~Dalmau.
\newblock A simple algorithm for {M}al'tsev constraints.
\newblock {\em SIAM Journal on Computing}, 36(1):16--27, 2006.

\bibitem{CCGGJ14}
D.~Cohen, J.~Crampton, A.~Gagarin, G.~Gutin, and M.~Jones.
\newblock Engineering algorithms for workflow satisfiability problem with
user-independent constraints.
\newblock In {\em FAW}, 2014, to appear.

\bibitem{CLRS01}
T.H.~Cormen, C.E.~Leiserson, R.L.~Rivest, and C.~Stein. 
\newblock Introduction to Algorithms, Second Edition. MIT Press and McGraw-Hill, 2001. 


\bibitem{CCGJR}
J.~Crampton, R.~Crowston, G.~Gutin, M.~Jones, and M.S. Ramanujan.
\newblock Fixed-parameter tractability of workflow satisfiability in the
  presence of seniority constraints.
\newblock In {\em FAW-AAIM}, 198-209 2013.


\bibitem{CrGu}
J.~Crampton and G.~Gutin.
\newblock Constraint expressions and workflow satisfiability.
\newblock In {\em SACMAT}, 73-84, 2013.

\bibitem{CrGuYeJournal}
J.~Crampton, G.~Gutin, and A.~Yeo.
\newblock On the parameterized complexity and kernelization of the workflow
  satisfiability problem.
\newblock {\em ACM Trans. Inform. System \& Secur.}, 16(1):4, 2013.
\newblock Preliminary version in ACM Conf. Comput. \& Communic. Secur. 2012,
857--868.

\bibitem{Cr05}
J. Crampton.
\newblock A reference monitor for workflow systems with constrained task
  execution.
\newblock In Elena Ferrari and Gail-Joon Ahn, editors, {\em SACMAT}, pages
  38--47. ACM, 2005.
  
  \bibitem{DechterCP} R. Dechter, Constraint Processing, Morgan Kaufmann, 2003.

\bibitem{DowFel99}
R.G. Downey and M.R. Fellows.
\newblock {\em Parameterized complexity}.
\newblock Springer, 1999.

\bibitem{FoLoSa}
F.V. Fomin, D.~Lokshtanov, and S.~Saurabh.
\newblock Efficient computation of representative sets with applications in
  parameterized and exact algorithms.
\newblock {\em CoRR}, abs/1304.4626, 2013.

\bibitem{ImPaZa01}
R. Impagliazzo, R. Paturi, and F. Zane.
\newblock Which problems have strongly exponential complexity?
\newblock {\em J. Comput. Syst. Sci.}, 63(4):512--530, 2001.

\bibitem{JoBeLaGh05}
J. Joshi, E. Bertino, U. Latif, and A. Ghafoor.
\newblock A generalized temporal role-based access control model.
\newblock {\em IEEE Trans. Knowl. Data Eng.}, 17(1):4--23, 2005.

\bibitem{KrBuJe}
A.~Krokhin, A.~Bulatov, and P.~Jeavons.
\newblock The complexity of constraint satisfaction: an algebraic approach.
\newblock In 
  {\em Structural Theory of Automata, Semigroups, and Universal
  Algebra}, volume 207 of {\em NATO Science Series II: Mathematics, Physics and
  Chemistry}, pages 181--213. Springer, 2005.
  
  
\bibitem{leBePa10}
D.~Le Berre, and A~Parrain.
\newblock The {SAT4J} library, release 2.2,
\newblock {\em J. Satisf. Bool. Model. Comput.}, 7: 59--64, 2010.
  
\bibitem{LoMaSa}
D.~Lokshtanov, D.~Marx, and S.~Saurabh.
\newblock Slightly superexponential parameterized problems.
\newblock In {\em SODA}, pages 760--776, 2011.


\bibitem{Nie06}
R.~Niedermeier.
\newblock {\em Invitation to Fixed Parameter Algorithms}.
Oxford
  U. Press, 2006.
  
\bibitem{CPhandbook}  F. Rossi, P. van Beek and T. Walsh, Handbook of Constraint Programming (Foundations of Artificial Intelligence), Elsevier, 2006.


\bibitem{SaCoFeYo96}
R.S. Sandhu, E.J. Coyne, H.L. Feinstein, and C.E. Youman.
\newblock Role-based access control models.
\newblock {\em IEEE Computer}, 29(2):38--47, 1996.


\bibitem{WangLi10}
Q. Wang and N. Li.
\newblock Satisfiability and resiliency in workflow authorization systems.
\newblock {\em ACM Trans. Inf. Syst. Secur.}, 13(4):40, 2010.



\end{thebibliography}
\end{document}